\documentclass[runningheads,envcountsame]{llncs}
\usepackage[T1]{fontenc}
%
\RequirePackage{amsmath,amsfonts,amssymb,todonotes}

\usepackage{hyperref} %
\usepackage[capitalize,noabbrev]{cleveref}
\usepackage{pgfplots}
\usepackage{tikz}

\usetikzlibrary{arrows,decorations,intersections,pgfplots.fillbetween}
\usetikzlibrary{shapes.multipart,shapes.misc}
\usepgflibrary{decorations.pathreplacing}
\usetikzlibrary{arrows,topaths}
\usetikzlibrary{automata,shapes.arrows, backgrounds, fadings,intersections, positioning}
\usetikzlibrary{shadows}

\usetikzlibrary{decorations.pathmorphing}
\usetikzlibrary{decorations.shapes}
\usetikzlibrary{shapes}
\usetikzlibrary{backgrounds}
\usetikzlibrary{fit}
\usetikzlibrary{arrows}
\usetikzlibrary{calc}
\usetikzlibrary{mindmap}

\tikzset{every picture/.style={thick,>=angle 60}}
\tikzset{Grand/.style={draw,circle,minimum size=11*1.5,inner sep=0}}
\tikzset{Gmax/.style={draw,rectangle,minimum size=9*1.5,inner sep=0}}
\tikzset{Gmin/.style={draw,diamond,minimum size=9*1.5,inner sep=0}}
\tikzset{gamebad/.style={fill=red}}

\definecolor{myorange}{RGB}{255, 128, 0}

\usepackage{multirow}
%
\usepackage{color}

\newcommand{\N}{\mathbb N}

\newcommand{\eps}{\varepsilon}

\newcommand{\eqby}[2][=]{\stackrel{\text{{\tiny{#2}}}}{#1}}
\newcommand{\eqdef}{\eqby{def}}

\newcommand{\reachset}{T}
\newcommand{\dist}{\mathcal{D}}

\newcommand{\reach}[1]{\mathtt{Reach}(#1)}
\newcommand{\reachn}[2]{\mathtt{Reach}_{#1}(#2)}

\newcommand{\avoid}[1]{\mathtt{Avoid}(#1)}

\newcommand{\game}{{\mathcal G}}

\newcommand{\states}{S}
\newcommand{\state}{s}
\newcommand{\playset}{E}
\newcommand{\zstrat}{\sigma}
\newcommand{\ostrat}{\pi}

\newcommand{\zstratset}{\Sigma}
\newcommand{\ostratset}{\Pi}

\newcommand{\expectval}{{\mathcal E}}
\newcommand{\probm}{{\mathcal P}}

\newcommand{\valueo}[1]{{\mathtt{val}_{#1}}}
\newcommand{\valueof}[2]{{\mathtt{val}_{#1}(#2)}}
\newcommand{\valueoflower}[2]{{\mathtt{val}^\downarrow_{#1}(#2)}}
\newcommand{\valueofupper}[2]{{\mathtt{val}^\uparrow_{#1}(#2)}}

\newcommand{\sca}[2]{\langle#1,#2\rangle}
\DeclareMathOperator*{\argmin}{arg\,min}

\begin{document}
\title{Memoryless Strategies in \\ Stochastic Reachability Games}
%
%
\author{Stefan Kiefer\inst{1} \and
Richard Mayr\inst{2} \and
Mahsa Shirmohammadi\inst{3} \and
Patrick Totzke\inst{4}
}
\authorrunning{S. Kiefer et al.}
%
\institute{University of Oxford, UK \and
University of Edinburgh, UK \and
IRIF \& CNRS, Universit\'e Paris cit\'e, France \and
University of Liverpool, UK
}
\maketitle              
\begin{abstract}
We study concurrent stochastic reachability games played on finite graphs.
Two players, Max and Min, seek respectively to maximize and minimize the
probability of reaching a set of target states.
We prove that Max has a memoryless strategy
that is optimal from all states that have an optimal strategy.
Our construction provides an alternative proof of this result by Bordais,
Bouyer and Le Roux~\cite{BordaisB022}, and strengthens it, as we allow
Max's action sets to be countably infinite.

\keywords{Stochastic games \and Reachability \and Strategy Complexity.}
\end{abstract}

\section{Introduction}

\paragraph{Background.}
We study 2-player zero-sum stochastic games.
First introduced by Shapley in his seminal 1953 work~\cite{shapley1953},
they model dynamic interactions in which the environment
responds randomly to players' actions.
Shapley's games were generalized in \cite{Gillette1958} and \cite{KumarShiau}
to allow infinite state and action sets and non-termination.

In \emph{concurrent games}, the two players (Max and Min) jointly create an infinite path through a directed graph. In each round of the play, both independently choose
an action. The next state is then determined according to a pre-defined distribution that depends on the current state and the chosen pair of actions.
\emph{Turn-based games} (also called switching-control games) are a subclass where
each state is owned by some player and only this player gets to choose an action.
These have received much attention by computer
scientists, e.g., \cite{GimbertH10,chen2013prism,bouyer2016,KieferMSW17a,BertrandGG17}.
An even more special case of stochastic games are \emph{Markov Decision Processes (MDPs)}
where all states are owned by Max (i.e., Min is passive).
MDPs are also called \emph{games against nature}.

We consider \emph{reachability objectives} which are defined w.r.t.\ a given subset of \emph{target} states.
A play is defined as winning for Max iff it
visits a target state at least once.
Thus Max aims to maximize the probability that the target set is reached.
Dually, Min aims to minimize the probability of reaching the target.
So Min pursues the dual \emph{safety objective} of
avoiding the target.

Reachability is arguably the simplest objective in games on
graphs. It can trivially be encoded into reward-based objectives; i.e.,
every play that reaches the target gets reward~$1$ and all other plays get
reward~$0$.

In turn-based reachability games over finite state spaces,
there always exist optimal Max strategies that are memoryless (choices only depend on the current state and not the history of the play; this is also
called positional) and deterministic (always chose on action as opposed to randomising among several) \cite{CONDON1992203}, \cite[Proposition 5.6.c, Proposition 5.7.c]{kucera_2011}.

This does not carry over to finite concurrent reachability games.
E.g., in the \emph{snowball game} (aka \emph{Hide-or-Run} game) \cite[Example 1]{Everett1957}\cite{KumarShiau,AlfaroHK98}
(also see \Cref{example:snowball} later in this paper),
Max does not have an optimal strategy.
However, by \cite[Corollary~3.9]{Secchi97}, Max always has $\eps$-optimal randomized
memoryless  strategies, and this holds even in countably infinite
reachability games with finite action sets.
(Deterministic strategies are generally useless in concurrent games, even in
very simple games such as \emph{Rock-Paper-Scissors}.)

Even though optimal Max strategies do not always exist, it is still
interesting how much memory they need in those instances where they do exist.
For finite concurrent reachability games with finite action sets,
it was recently shown by Bordais, Bouyer and Le Roux~\cite{BordaisB022} that optimal Max strategies,
if they exist, can be chosen as randomized memoryless.
The proof is constructive and iteratively builds the strategy on the finite
state space. To show the correctness, one needs to argue about the performance
of the constructed strategy. The proof in \cite{BordaisB022}
heavily relies on the properties of induced finite MDPs,
obtained by fixing one strategy in the game. In particular, it uses the
existence of end components in these finite MDPs and their particular
properties.

\paragraph{Our contribution.}
We give an alternative proof of this result by Bordais,
Bouyer and Le Roux~\cite{BordaisB022}.
While our proof is also constructive, it is simpler and works
directly from first principles on games, without using properties
of induced MDPs.
Moreover, it uses a construction that we call ``leaky games'', by which we
reduce the reachability objective to its dual safety objective.
Finally, our result is slightly stronger, because it holds even if
Max is allowed countably infinite action sets (while Min still has
finite action sets).
\footnote{It may be possible to generalize the proof in \cite{BordaisB022}
to countably infinite Max actions sets, but this would require (at least)
a generalization of the fixpoint theorem \cite[Theorem 12]{BordaisB022} to this setting.}


This result requires the state space to be finite.
However, for other results in this paper we allow the state space and action sets to be countably infinite, unless explicitly stated otherwise.

\section{Preliminaries}

A \emph{probability distribution} over a countable set $\states$ is a function
$\alpha:\states\to[0,1]$ with $\sum_{\state\in \states}\alpha(\state)=1$.
The \emph{support} of~$\alpha$ is the set $\{\state \in \states \mid \alpha(\state) >0\}$.
We write $\dist(\states)$ for the set of all probability distributions over~$\states$.
For $\alpha \in \dist(\states)$ and a function~$v : S \to \mathbb{R}$ we write $\sca{\alpha}{v} \eqdef \sum_{s \in S} \alpha(s) v(s)$ for the expectation of~$v$ with respect to~$\alpha$.

\subsection*{Stochastic Games}

We study perfect information stochastic games between two players, Max(imizer) and Min(imizer).
A \emph{(concurrent) game}~$\game$ is played on a countable set of states~$\states$.
For each state $\state \in \states$ there are nonempty countable \emph{action} sets
$A(\state)$ and $B(\state)$ for Max and Min, respectively.
A \emph{mixed action} for Max (resp.\ Min) in state $\state$
is a distribution over $A(\state)$ (resp.\ $B(\state)$).

Let $Z \eqdef \{(\state,a,b) \mid \state \in \states,\ a \in A(\state),\ b \in B(\state)\}$.
For every triple $(\state,a,b) \in Z$
there is a distribution $p(\state,a,b) \in \dist(\states)$ over successor states.
We call a state~$s \in S$ a \emph{sink} state if $p(s,a,b) = s$ for all $a \in A(s)$ and $b \in B(s)$.
We extend the \emph{transition function}~$p$ to mixed actions $\alpha \in \dist(A(s))$ and $\beta \in \dist(B(s))$ by letting
\[p(s,\alpha,\beta)\,\,\eqdef \sum_{a \in A(s)} \sum_{b \in B(s)} \alpha(a) \beta(b) p(s,a,b),\] which is a distribution over~$S$.
A \emph{play} from an initial state $\state_0$ is an infinite
sequence in $Z^\omega$ where the first triple contains $\state_0$.
Starting from~$\state_0$, the game is played in stages $\N=\{0,1,2,\dots\}$.
At every stage $n \in \N$, the play is in some state~$\state_n$.
Max chooses a mixed action $\alpha_n \in \dist(A(\state_n))$ and Min chooses
a mixed action~$\beta_n \in \dist(B(\state_n))$.
The next state $\state_{n+1}$ is then chosen according to the distribution
$p(\state_n,a_n,b_n)$.

\subsection*{Strategies and Probability Measures}
The set of \emph{histories} at stage $n$, with $n\in \mathbb{N}$, is denoted by $H_n$. That is,
$H_0 \eqdef \states$ and $H_n \eqdef Z^n \times \states$ for all $n>0$.
Let $H \eqdef \bigcup_{n \in \N} H_n$ be the set of all histories; note that~$H$ is countable.
For each history $h = (s_0,a_0,b_0) \cdots (s_{n-1},a_{n-1},b_{n-1}) s_{n} \in H_n$, let $\state_h \eqdef s_{n}$ denote the final state in $h$.

A \emph{strategy} for Max
is a function $\zstrat$ that to each history~$h \in H$ assigns
a mixed action $\zstrat(h) \in \dist(A(\state_h))$.
Denote by $\zstratset$ the set of strategies for
Max.
Analogously, a \emph{strategy} for  Min
is a function  $\ostrat$ that to each history~$h$ assigns
a mixed action \ $\ostrat(h) \in \dist(B(\state_h))$, and  $\ostratset$ denotes the set of strategies for Min.
A Max strategy is called \emph{memoryless} if $\sigma(h)$ depends only on~$s_h$; i.e., for all $h,h' \in H$ with $s_h = s_{h'}$ we have $\sigma(h) = \sigma(h')$.
A memoryless strategy~$\sigma$ is fully determined by~$(\sigma(s))_{s \in S}$.
A \emph{memory-based strategy} bases its decisions not only on the current
state, but also on the current mode of its memory, and it can update its
memory at every step, depending on the observed events in this step.
Memory is called \emph{public} if the content is also observable by the opposing
player and \emph{private} otherwise.
Finite-memory strategies use a memory with only finitely many different
possible modes.
A step counter is a special case of infinite memory in the form of a discrete
clock that gets incremented at every step, independently of the actions of the
players. Strategies that use just a step counter are also called
\emph{Markov strategies}.

An initial state $\state_0$ and a pair of strategies $\zstrat, \ostrat$
for Max and Min
induce a probability measure on sets of plays.
We write $\probm_{\game,\state_0,\zstrat,\ostrat}({\playset})$ for the probability of a
measurable set of plays $\playset$ starting from~$\state_0$.
It is initially defined for the cylinder sets generated by the histories and then extended to the sigma-algebra by Carath\'eodory's unique extension
theorem~\cite{billingsley-1995-probability}.
Given a random variable $V : Z^\omega \to \mathbb{R}$, we will write $\expectval_{\game,\state_0,\zstrat,\ostrat}(V)$ for the expectation of~$V$ w.r.t.~$\probm_{\game,\state_0,\zstrat,\ostrat}$.
We may drop~$\game$ from the subscript when it is understood.

\subsection*{Objectives}
We consider reachability and safety objectives.
Given a set $\reachset \subseteq \states$ of states, the \emph{reachability} objective $\reach{\reachset}$ is the set of plays that visit $\reachset$ at least once, i.e., $s_h \in \reachset$ holds for some history~$h$ that is a prefix of the play.
The dual \emph{safety} objective~$\avoid{\reachset} \eqdef Z^\omega \setminus \reach{\reachset}$ consists of the plays that never visit~$T$.
We can and will assume that $T = \{\top\}$ holds for a sink state $\top \in S$ and write $\reach{\top}$ for $\reach{\{\top\}}$.
Similarly, we assume that there is another sink state $\bot \in S$, with $\bot \ne \top$, and write $\avoid{\bot}$ for~$\avoid{\{\bot\}}$.
Max attempts to maximize the probability of achieving the given objective (usually $\reach{\top}$ or $\avoid{\top}$), whereas Min attempts to minimize it.

\subsection*{Value and Optimality}
For a game $\game$, objective $\playset$ and initial state~$\state_0$, the \emph{lower value} and \emph{upper value} of~$s_0$ are respectively defined as
\[
\valueoflower{\game,\playset}{\state_0} \eqdef \sup_{\zstrat \in \zstratset} \inf_{\ostrat \in \ostratset}
\probm_{\game,\state_0,\zstrat,\ostrat}({\playset}) \quad \text{and} \quad
\valueofupper{\game,\playset}{\state_0} \eqdef \inf_{\ostrat \in \ostratset} \sup_{\zstrat \in \zstratset}
\probm_{\game,\state_0,\zstrat,\ostrat}({\playset})\,.
\]
The inequality $\valueoflower{\game,\playset}{\state_0} \le \valueofupper{\game,\playset}{\state_0}$ holds by definition.
If $\valueoflower{\game,\playset}{\state_0} = \valueofupper{\game,\playset}{\state_0}$, then this quantity is called the
\emph{value}, denoted by $\valueof{\game,\playset}{\state_0}$.
For reachability objectives, like all Borel objectives, the value exists if all action sets are finite~\cite{Maitra-Sudderth:1998}, and even if for all states~$s$ we have that $A(s)$ is finite or $B(s)$ is finite~\cite[Theorem 11]{Flesch-Predtetchinski-Sudderth:2020}.
We always assume the latter, 
so that $\valueof{\game,\playset}{s_0}$ exists.
For $\eps \ge 0$, a Max strategy~$\zstrat$ is called \emph{$\eps$-optimal} from~$s_0$ if for all Min strategies~$\pi$ we have
$\probm_{\game,\state_0,\zstrat,\ostrat}({\playset})\ge \valueof{\game,\playset}{\state_0} - \eps$.
A $0$-optimal strategy is also called \emph{optimal}.
For~$E = \reach{\top}$ or $E = \avoid{\bot}$ we have
\begin{equation} \label{eq:value}
\valueof{\game,E}{s_0} \ = \ \sup_{\alpha \in \dist(A(s_0))} \inf_{\beta \in \dist(B(s_0))} \sca{p(s_0,\alpha,\beta)}{\valueo{\game,E}{}}\,.
\end{equation}
We note in passing that the equality in~\eqref{eq:value} also holds in the states~$\top, \bot$, as these are sink states.

\section{Martingales for Safety}

For the remainder of the paper we fix a game~$\game$ over a countable state space~$S$.
Whenever we make finiteness assumptions on~$S$ and the action sets, we state them explicitly.
Several times we use the following lemma, a consequence of the optional-stopping theorem for submartingales.

\begin{lemma} \label{lemma:mart}
Let $\sigma$ and~$\pi$ be Max and Min strategies, respectively.
Suppose $v : S \to [0,1]$ is a function with $v(\bot) = 0$ such that $\sca{p(s_h,\sigma(h),\pi(h))}{v} \ge v(s_h)$ holds for all histories~$h$.
Then $\probm_{s_0,\sigma,\pi}(\avoid{\bot}) \ge v(s_0)$ holds for all $s_0 \in S$.
\end{lemma}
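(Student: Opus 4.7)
The plan is to interpret the hypothesis as saying that the process $V_n \eqdef v(s_n)$, where $s_n$ is the random state at stage $n$ under the measure $\probm_{s_0,\sigma,\pi}$, is a bounded submartingale with respect to the natural filtration $(\mathcal{F}_n)$ generated by the history up to stage~$n$. Indeed, conditioning on a history $h \in H_n$, the conditional distribution of $s_{n+1}$ is exactly $p(s_h, \sigma(h), \pi(h))$, so the hypothesis rewrites as $\expectval[V_{n+1} \mid \mathcal{F}_n] = \sca{p(s_h,\sigma(h),\pi(h))}{v} \ge v(s_h) = V_n$ almost surely. Moreover $V_n \in [0,1]$ because $v$ maps into $[0,1]$.

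Next, I would exploit absorption at $\bot$. Because $\bot$ is a sink, the event $\reach{\bot}$ coincides, up to a null set, with the event that $s_n = \bot$ for all sufficiently large~$n$; hence on $\reach{\bot}$ we have $V_n \to v(\bot) = 0$. On $\avoid{\bot}$ we only know $V_n \in [0,1]$, but the martingale convergence theorem applied to the bounded submartingale $(V_n)$ ensures that the limit $V_\infty \eqdef \lim_n V_n$ exists almost surely, and $V_\infty \le \mathbf{1}_{\avoid{\bot}}$ almost surely.

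From the submartingale property, $\expectval[V_n]$ is non-decreasing, so $\expectval[V_n] \ge \expectval[V_0] = v(s_0)$ for all~$n$. Since $V_n$ is uniformly bounded by~$1$, bounded convergence yields
\[
v(s_0) \ \le \ \lim_{n\to\infty} \expectval[V_n] \ = \ \expectval[V_\infty] \ \le \ \expectval[\mathbf{1}_{\avoid{\bot}}] \ = \ \probm_{s_0,\sigma,\pi}(\avoid{\bot}),
\]
which is the desired inequality.

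The only delicate point is the justification that $V_n \to 0$ on $\reach{\bot}$, but this is immediate from $\bot$ being a sink together with $v(\bot)=0$; no optional-stopping theorem for an unbounded stopping time is strictly needed, as bounded convergence on the submartingale $(V_n)$ already does the job. The other potential subtlety, the measurability of $\reach{\bot}$ and of $V_\infty$, is standard since histories are countable and the cylinder $\sigma$-algebra supports all the events in sight.
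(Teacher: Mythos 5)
Your proof is correct and follows essentially the same route as the paper: both set up $V_n = v(s_n)$ as a bounded submartingale with respect to the history filtration, obtain almost-sure convergence to a limit $V_\infty$, and exploit the sink property of $\bot$ together with $v(\bot)=0$ to bound $V_\infty$ by $\mathbf{1}_{\avoid{\bot}}$, concluding via the chain $v(s_0) \le \expectval[V_\infty] \le \probm_{s_0,\sigma,\pi}(\avoid{\bot})$. The only difference is in which standard theorem is cited for the middle step: the paper invokes the optional-stopping theorem to get $\expectval[V_\infty] \ge \expectval[V_0]$, whereas you derive it from the monotonicity of $\expectval[V_n]$ plus bounded convergence, an equally valid and arguably more elementary justification of the same inequality.
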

\begin{proof}
Let $s_0 \in S$.
Define a sequence of random variables $V(0), V(1), \ldots$ with $V(i) : Z^\omega \to [0,1]$ by
\[
 V(i)((s_0,a_0,b_0) (s_1,a_1,b_1) \cdots) \ \eqdef \ v(s_i)\,.
\]
Similarly, denote by $H(i) : Z^\omega \to H_i$ the function that maps each random  play to its unique prefix in~$H_i$.
Recall that we can create random variables via composition of functions. Consider the function~$s: H \to S$ that maps each finite history~$h$ to its final state~$s_h$; this function composed with $H(i)$ is the  random variable $s_{H(i)}: Z^\omega \to S$. Moreover, the function~$v$ composed with  $s_{H(i)}$ is the random variable~$v(s_{H(i)})$.
Note that  $V(i) = v(s_{H(i)})$. For $i \ge 0$, let $\mathcal{F}_i$ be the sigma-algebra generated by the cylinder sets corresponding to the histories $h \in H_i$.
Then $V(i)$ and~$H(i)$ are $\mathcal{F}_i$-measurable.
By the assumption on~$v$ we have
\begin{align*}
\expectval_{s_0,\sigma,\pi} (V(i+1) \mid \mathcal{F}_i) \
& = \ \expectval_{s_0,\sigma,\pi} (v(s_{H(i+1)}) \mid \mathcal{F}_i) \\
& = \ \sum_{s \in S} p(s_{H(i)}, \sigma(H(i)), \pi(H(i))) v(s) \\
& = \ \sca{p(s_{H(i)}, \sigma(H(i)), \pi(H(i)))}{v} \\
& \ge \ v(s_{H(i)}) \ = \ V(i)\;.
\end{align*}
It follows that
$V(0), V(1), \ldots$ is a submartingale with respect to the filtrations~$\mathcal{F}_0, \mathcal{F}_1, \ldots$.
Moreover $|V(i)| \le 1$ holds for all~$i$. This boundedness condition together with
the optional-stopping theorem imply that, almost surely, $V(0), V(1), \ldots$ converges to a random variable $V(\infty)$ with
\[
 \expectval_{s_0,\sigma,\pi} (V(\infty)) \ \ge \ \expectval_{s_0,\sigma,\pi} (V(0)) \ = \ v(s_0)\,.
\]
On the other hand, since $\bot$~is a sink,
\begin{align*}
 \expectval_{s_0,\sigma,\pi} (V(\infty)) \
 & = \ \probm_{s_0,\sigma,\pi} (\avoid{\bot}) \cdot \expectval_{s_0,\sigma,\pi}(V(\infty) \mid \avoid{\bot}) \\
 & \ \  \mbox{} + \probm_{s_0,\sigma,\pi} (\reach{\bot}) \cdot \expectval_{s_0,\sigma,\pi}(V(\infty) \mid \reach{\bot}) \\
 & \le\ \probm_{s_0,\sigma,\pi}(\avoid{\bot}) \cdot 1 + \probm_{s_0,\sigma,\pi}(\reach{\bot}) \cdot 0 \\
 & =\ \probm_{s_0,\sigma,\pi}(\avoid{\bot})\,. \tag*{\qed}
\end{align*}
\end{proof}

\section{Optimal Safety}
We will construct memoryless strategies for reachability from strategies for safety.
The following proposition will be useful for this purpose and is of independent interest.

\begin{proposition} \label{proposition:safety-memoryless}
Suppose Max's action sets are finite.
Then Max has a memoryless strategy~$\sigma$ that is optimal for $\avoid{\bot}$ from every state.
That is, for all $s \in S$ and all Min strategies~$\pi$
\[
 \probm_{s,\sigma,\pi}(\avoid{\bot}) \ \ge\ \valueof{\avoid{\bot}}{s}\,.
\]
\end{proposition}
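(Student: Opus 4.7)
The plan is to use $v(s) \eqdef \valueof{\game,\avoid{\bot}}{s}$ itself as the potential function and apply \Cref{lemma:mart}. The strategy is to choose, at each state~$s$, a mixed action $\alpha_s \in \dist(A(s))$ that witnesses the supremum in~\eqref{eq:value}, and set $\sigma(s) \eqdef \alpha_s$. Because the mixed action depends only on~$s$, the resulting strategy~$\sigma$ is memoryless by construction.

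The first step is to show that the supremum in~\eqref{eq:value} is attained when $A(s)$ is finite. Since $A(s)$ is finite, the simplex $\dist(A(s))$ is a compact subset of a finite-dimensional Euclidean space. For each fixed $\beta \in \dist(B(s))$, the map $\alpha \mapsto \sca{p(s,\alpha,\beta)}{v}$ is bilinear in $(\alpha,\beta)$, hence linear and continuous in~$\alpha$ on this simplex. The map
\[
  f(\alpha) \ \eqdef \ \inf_{\beta \in \dist(B(s))} \sca{p(s,\alpha,\beta)}{v}
\]
is therefore an infimum of continuous (in fact affine) functions, so $f$ is upper semicontinuous (and concave) on $\dist(A(s))$. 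An upper semicontinuous function on a nonempty compact set attains its supremum, so for each state~$s$ there exists $\alpha_s \in \dist(A(s))$ with $f(\alpha_s) = \valueof{\game,\avoid{\bot}}{s} = v(s)$.

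The second step is to apply \Cref{lemma:mart}. By construction of~$\sigma$, for every history~$h$ with final state $s_h$ and every Min strategy~$\pi$,
\[
  \sca{p(s_h, \sigma(h), \pi(h))}{v} \ \ge \ \inf_{\beta \in \dist(B(s_h))} \sca{p(s_h, \alpha_{s_h}, \beta)}{v} \ = \ v(s_h)\,.
\]
Moreover $v(\bot) = 0$ because $\bot$ is a sink distinct from~$\top$, so any play starting in~$\bot$ visits~$\bot$ with probability~$1$. Thus $v$ satisfies all hypotheses of \Cref{lemma:mart}, which yields $\probm_{s,\sigma,\pi}(\avoid{\bot}) \ge v(s) = \valueof{\game,\avoid{\bot}}{s}$ for every starting state~$s$ and every Min strategy~$\pi$. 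This is the required optimality.

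The only delicate point is the attainment of the supremum, and this is exactly where the assumption that $A(s)$ is finite is used: it provides compactness of $\dist(A(s))$. Note that no finiteness or compactness of $\dist(B(s))$ is needed, because the infimum over~$\beta$ is already built into the definition of~$f$ and only its upper semicontinuity matters for the argument.
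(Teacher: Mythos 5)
Your proof is correct and follows essentially the same route as the paper's: select a maximizing mixed action at each state via compactness of $\dist(A(s))$, extend to a memoryless strategy, and apply \cref{lemma:mart} with $\valueo{\avoid{\bot}}$ as the potential function. The only difference is that you spell out why the supremum is attained (upper semicontinuity of the infimum of affine functions), a detail the paper leaves implicit in its one-line appeal to compactness.
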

\begin{proof}
Since Max has only finite action sets, the supremum in~\eqref{eq:value} is taken over a compact set of mixed actions.
Therefore, it is a maximum; i.e, for every $s \in S$ there is a mixed action, say $\sigma(s) \in \dist(A(s))$, such that for all mixed Min actions $\beta \in \dist(B(s))$
\begin{equation}
\label{eq:safe}
 \valueof{\avoid{\bot}}{s} \ \le \ \sca{p(s,\sigma(s),\beta)}{\valueo{\avoid{\bot}}}\,.
\end{equation}
Extend~$\sigma$ to a memoryless Max strategy in the natural way, and let $\pi$ be an arbitrary Min strategy.
Then the function $\valueo{\avoid{\bot}}$ satisfies the conditions of \cref{lemma:mart}.
Thus, $\probm_{s,\sigma,\pi}(\avoid{\bot}) \ge \valueof{\avoid{\bot}}{s}$ holds for all~$s$.
\qed
\end{proof}

\begin{figure}[tbp]
\begin{center}
\begin{minipage}[l]{0.4\textwidth}

\begin{tabular}{c|cc}
   transition& \multirow{2}{*}{$s$} & \multirow{2}{*}{ $\bot$}  \\
    function~$p$&& \\
 \hline
 $p(s,a_i,b)$ & $1-2^{-i} \; $  &  $\;  2^{-i}$\\
  \end{tabular}
  \end{minipage}
\begin{minipage}[r]{0.5\textwidth}
\begin{tikzpicture}[scale=0.8,xscale=1.6,yscale=1.3]

\node[Grand] (s) at (0,0) {$s$};
\node[Grand] (bot) at (2.5,0) {$\bot$};

\draw[->] (s) edge node[pos=.3, below=.15cm,fill=blue!40!white,rectangle,inner sep=2]{$a_i$} node[pos=.5, below=.07cm,fill=red!40!white,diamond, inner sep=1]{$b$} node[pos=.75, below=.05cm,inner sep=2]{$:~2^{-i}$}  (bot);
\draw[->] (s) edge[loop above] node[above left=.15cm and .3cm,fill=blue!40!white,rectangle,inner sep=2]{$a_i$} node[above=.1cm,fill=red!40!white,diamond, inner sep=1]{$b$} node[above right=.15cm and .2cm,inner sep=2]{$:~1-2^{-i}$}  (s);
\draw[->] (bot) edge[loop above] (bot);
\end{tikzpicture}

\end{minipage}	
\end{center}	
\caption{A finite-state MDP where Max has no $\eps$-optimal memoryless strategy for safety objective~$\avoid{\bot}$. The state $\bot$ is a sink state.
The set $\{a_i \mid i\in \mathbb{N}\}$ of Max's actions at $s$ is countable, while Min only has a single action~$b$.}
\label{fig:safety}
\end{figure}
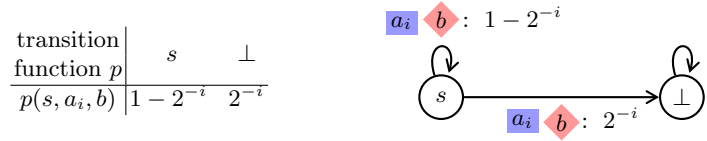

\begin{example}
The assumption that Max has finite action sets cannot be dropped from~\Cref{proposition:safety-memoryless}. This assumption is required  even for finite-state MDPs, i.e., when $S$~is finite and Min has only one action per state.
In fact, memoryless $\eps$-optimal strategies do not always exist for safety objectives. As an example, consider the  game depicted in
\Cref{fig:safety}, which was first discussed in~\cite{KMSW2017}.
The strategy that plays~$a_{i+k}$ at stage~$i$ is $\frac{1}{2^k}$-optimal for Max; indeed, the probability of reaching $\bot$ at stage~$i$ will be at most~$2^{-(i+k+1)}$. Hence, the probability of $\avoid{\bot}$ is at least $1-\sum_{i\in \mathbb{N}}2^{-(i+k+1)}= 1-\frac{1}{2^k}$, as required. This implies that $\valueof{\avoid{\bot}}{s}=1$.

Let $\sigma$ be some arbitrary memoryless strategy for Max. The probability that state~$\bot$ is not reached at stage~$i$ is  $(1-p(s,\sigma(s),b)(\bot))^i$. Clearly, the probability of the event $\avoid{\bot}$ is~$0$.
\qed
\end{example}

\section{$\eps$-Optimal Reachability} \label{section:eps-optimal-reach}

Max does not always have optimal strategies for reachability, even when $S$ and all action sets are finite.
\begin{example} \label{example:snowball}
Consider the  game depicted in
\Cref{fig:hiderun},
introduced in~\cite{KumarShiau}, also known as \emph{snowball} or \emph{hide-or-run} \cite{de2007concurrent}.
Following the intuition given in~\cite{de2007concurrent}, Max initially hides behind a bush (state~$s$) and his goal is to reach home (state~$\top$) without being hit by a snowball; Min is armed with a single snowball.

Below, in order to show that $\valueof{\reach{\top}}{s}=1$,
for all~$\eps$, such that $0< \eps<1$, we exhibit an $\eps$-optimal strategy for Max.
Given~$\eps$, consider the memoryless
 Max's strategy $\sigma$ defined by $\sigma(s)(hide)=1-\eps$ and
$\sigma(s)(run)=\eps$. After fixing $\sigma$ in the game, by \Cref{proposition:safety-memoryless}, Min has  optimal memoryless strategies~$\pi$ for $\avoid{\top}$ in the new game.
Let $\pi$ be a memoryless strategy for Min,  defined as
\[\pi(s)(throw)=x \qquad   \text{ and } \qquad \pi(s)(wait)=1-x.\]
for some $0\leq  x \leq 1$. Then
\[
\probm_{s,\sigma,\pi}(\reach{\top}) = \eps(1-x)+(1-\eps)x+(1-\eps)(1-x)\probm_{s,\sigma,\pi}(\reach{\top}).
\]
Solving this for $\probm_{s,\sigma,\pi}(\reach{\top})$ yields
\[
  \probm_{s,\sigma,\pi}(\reach{\top}) \ = \ 1- \frac{x\eps}{\eps+x(1-\eps)} \ \geq \ 1- \frac{x\eps}{x\eps+x(1-\eps)} \ = \ 1-\eps\,,\]
implying that $\sigma$ is an $\eps$-optimal strategy for Max.

 A straightforward argument shows that there is no optimal strategy for Max in
this game. If Max always plays hide, Min can wait forever. If not,
assume that at some stage Max plays run with probability $\eps>0$; then Min
would throw to reach $\bot$ with a positive probability.
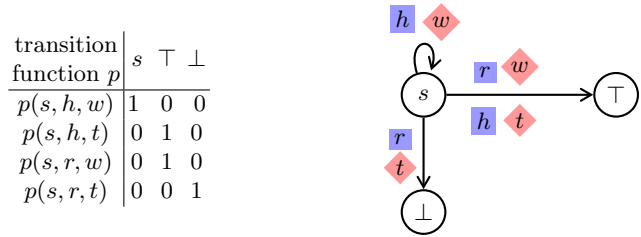
\begin{figure}[tbp]
\begin{center}
\begin{minipage}[l]{0.4\textwidth}

\begin{tabular}{c|ccc}
   transition&  \multirow{2}{*}{$s$} &\multirow{2}{*}{$\top$} & \multirow{2}{*}{$\bot$} \\
 function~$p$ &&& \\
 \hline
$p(s,h,w)$ & $1 \;$ & $\; 0 \; $  & $\; 0$\\
$p(s,h,t)$ & $0  $ & $1$& $0$\\
$p(s,r,w)$ & $0$ & $1$ &$0$\\
$p(s,r,t)$ & $0$ & $0$&$1$\\
  \end{tabular}
  \end{minipage}
\begin{minipage}[r]{0.5\textwidth}
\begin{tikzpicture}[scale=0.8,xscale=1.6,yscale=1.3]

\node[Grand] (s) at (0,0) {$s$};
\node[Grand] (top) at (2,0) {$\top$};
\node[Grand] (bot) at (0,-1.5) {$\bot$};
\draw[->] (s) edge node[pos=.3, left=.15cm,fill=blue!40!white,rectangle,inner sep=2]{$r$} node[pos=.7, left=.07cm,fill=red!40!white,diamond, inner sep=1]{$t$}   (bot);
\draw[->] (s) edge node[above left=.15cm and .3cm,fill=blue!40!white,rectangle,inner sep=2]{$r$} node[above=.1cm,fill=red!40!white,diamond, inner sep=1]{$w$}
node[below left=.15cm and .3cm,fill=blue!40!white,rectangle,inner sep=2]{$h$} node[below=.1cm,fill=red!40!white,diamond, inner sep=1]{$t$}(top);

\draw[->] (s) edge[loop above] node[above left=.15cm and .1cm,fill=blue!40!white,rectangle,inner sep=2]{$h$} node[above right=.16cm and .1cm,fill=red!40!white,diamond, inner sep=1]{$w$}   (s);
\end{tikzpicture}
\end{minipage}	
\end{center}	
\caption{The snowball game where Max has no optimal  strategy for~$\reach{\top}$. The states~$\top$ and $\bot$ are sink states.
Max's action set  at $s$ is $\{hide,run\}$, shown as $h$ and $r$ in the figure, while Min's action set is~$\{wait,throw\}$, shown as $w$ and $t$.}
\label{fig:hiderun}
\end{figure}
\qed
\end{example}

In this section we prove the following proposition.

\begin{proposition} \label{proposition:reach-eps}
Suppose that Min's action sets are finite.
Then for every $\eps > 0$ and every finite subset of states $S_0 \subseteq S$ Max has a memoryless strategy~$\sigma$ that is $\eps$-optimal for~$\reach{\top}$ from all $s_0 \in S_0$ and for each $s \in S$ the support of~$\sigma(s)$ is finite.
\end{proposition}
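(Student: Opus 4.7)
The plan is to carry out the ``leaky game'' reduction mentioned in the introduction. Define $\game_\delta$ to be the modification of $\game$ in which every transition from a non-sink state~$s$ assigns probability~$\delta$ to $\bot$ and distributes the remaining mass $1-\delta$ according to $p(s, a, b)$; the sinks $\top$ and $\bot$ are unchanged. Then every play of $\game_\delta$ almost surely ends in $\{\top, \bot\}$, so the events $\reach{\top}$ and $\avoid{\bot}$ coincide up to null sets, and we set $v_\delta \eqdef \valueo{\game_\delta, \avoid{\bot}} = \valueo{\game_\delta, \reach{\top}}$. Leaks only hurt Max, so $v_\delta \le v \eqdef \valueo{\game, \reach{\top}}$, and we will argue that $v_\delta(s_0) \to v(s_0)$ as $\delta \to 0$ for every $s_0 \in S_0$.

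Granting this, fix $\delta$ so that $v_\delta(s_0) \ge v(s_0) - \eps/2$ for all $s_0$ in the finite set~$S_0$. For each state~$s$, the value equation~\eqref{eq:value} for $\game_\delta$ reads $v_\delta(s) = \sup_\alpha \min_{b \in B(s)} \sca{p_\delta(s, \alpha, b)}{v_\delta}$, with the min taken over the finite set $B(s)$. Given an $\eta > 0$, pick an $\alpha$ that achieves this supremum up to $\eta/2$, then truncate it to a finite subset $F \subseteq A(s)$ carrying all but at most $\eta/2$ of the mass, reassigning the tail mass to an arbitrary element of $F$. Since $v_\delta \in [0,1]$, this perturbs each of the finitely many functionals $\alpha' \mapsto \sca{p_\delta(s, \alpha', b)}{v_\delta}$ by at most the moved mass, so the resulting finitely supported $\alpha_s$ satisfies $\min_b \sca{p_\delta(s, \alpha_s, b)}{v_\delta} \ge v_\delta(s) - \eta$. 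Let $\sigma(s) \eqdef \alpha_s$; this is memoryless with finite support at every state.

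To show $\sigma$ is $\eps$-optimal from $S_0$, introduce the potential $v'(s) \eqdef v_\delta(s) - \eta\, w(s)$, where $w(s) \eqdef 1/\delta$ for $s \notin \{\top, \bot\}$ and $w(\top) = w(\bot) = 0$. Because $\game_\delta$ sends at least mass~$\delta$ to $\bot$ at every non-sink step, $\sca{p_\delta(s, \sigma(s), \beta)}{w} \le (1-\delta)/\delta = w(s) - 1$, and combining this with the one-step bound on $v_\delta$ yields $\sca{p_\delta(s, \sigma(s), \beta)}{v'} \ge v'(s)$ for every mixed Min action~$\beta$. The proof of \Cref{lemma:mart} never uses positivity and goes through verbatim for bounded~$v'$ with $v'(\bot) = 0$ and $v' \le 1$, giving $\probm_{\game_\delta, s_0, \sigma, \pi}(\avoid{\bot}) \ge v'(s_0) = v_\delta(s_0) - \eta/\delta$ for every Min strategy~$\pi$. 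Setting $\eta \eqdef \delta\eps/2$ and noting $\probm_{\game, s_0, \sigma, \pi}(\reach{\top}) \ge \probm_{\game_\delta, s_0, \sigma, \pi}(\avoid{\bot})$ (extra leaks only reduce the probability of reaching~$\top$) then yields $\probm_{\game, s_0, \sigma, \pi}(\reach{\top}) \ge v_\delta(s_0) - \eps/2 \ge v(s_0) - \eps$.

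The main obstacle is the convergence claim $v_\delta(s_0) \to v(s_0)$. For each fixed pair $(\sigma^*, \pi)$, monotone convergence gives $\probm_{\game_\delta, s_0, \sigma^*, \pi}(\reach{\top}) = \expectval_{\game, s_0, \sigma^*, \pi}[(1-\delta)^T \mathbf{1}_{T < \infty}] \nearrow \probm_{\game, s_0, \sigma^*, \pi}(\reach{\top})$, where $T$ is the hitting time of $\top$; the subtle point is commuting this limit with the infimum over~$\pi$, since Min can in principle make~$T$ arbitrarily heavy-tailed. I would handle this by first producing an $\eps/4$-optimal memoryless Max strategy (adapting \cite{Secchi97} to the countable-action setting) and then, since $S_0$ is finite, reducing to an $N$-step horizon chosen uniformly over~$S_0$, so that $(1-\delta)^N$ is close to~$1$ for appropriately small~$\delta$.
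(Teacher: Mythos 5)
Your construction of the memoryless strategy inside the leaky game is correct, and it is genuinely different from the paper's: the paper makes Max's action sets finite (by pruning away all actions not used by an auxiliary finite-horizon strategy) so that \cref{proposition:safety-memoryless} yields an \emph{exactly} optimal memoryless safety strategy, whereas you keep the full countable action sets, take finitely supported $\eta$-optimal local choices for $v_\delta$, and repair the per-step loss with the potential $v'(s) = v_\delta(s) - \eta\, w(s)$, $w \equiv 1/\delta$ off the sinks. That trick is sound: since every non-sink step leaks at least $\delta$ to $\bot$, $w$ drops by at least $1$ per step in expectation, so $v'$ is a submartingale, and you are right that the proof of \cref{lemma:mart} never uses non-negativity, only $v' \le 1$, boundedness, and $v'(\bot)=0$. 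Granting your convergence claim, the final chain of inequalities is valid.

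The genuine gap is that convergence claim, $v_\delta(s_0) \to \valueof{\game,\reach{\top}}{s_0}$, which you correctly flag as the main obstacle but then dispose of inadequately. First, ``adapting \cite{Secchi97} to the countable-action setting'' is circular: Secchi's memoryless $\eps$-optimality is proved for finite action sets, and its extension to countably infinite Max action sets is essentially the proposition you are proving (moreover, memorylessness of this auxiliary strategy plays no role in your limit argument, so it buys nothing). Second, ``reducing to an $N$-step horizon chosen uniformly over $S_0$'' \emph{is} the statement that needs proof: it amounts to $\lim_{N} \valueof{N}{s_0} = \valueof{\reach{\top}}{s_0}$ together with the existence of a single strategy whose $N$-step guarantee against \emph{all} Min strategies is near the value. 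This exchange of limit and $\inf_\pi$ is false when Min has infinite action sets (Min can pick, already in the first step, an action that delays reaching $\top$ beyond any fixed horizon even if every play eventually reaches $\top$), so it must exploit finiteness of the sets $B(s)$ and cannot be waved through. This is exactly where the paper invests its technical work: \cref{lemma:v_infty-fixed} shows $\valueo{\infty} \eqdef \lim_n \valueo{n}$ satisfies the one-step fixed-point equation (the finiteness of $B(s)$ is what lets $\lim_n$ commute with $\min_{b \in B(s)}$), and \cref{lemma:v_infty=val} then applies \cref{lemma:mart} to $1 - \valueo{\infty}$ to conclude $\valueo{\reach{\top}} \le \valueo{\infty}$. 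If you import \cref{lemma:v_infty=val} (or prove the exchange yourself, e.g.\ via compactness of Min's strategy space, which again needs finite $B(s)$), your proof closes: choose $N$ with $\valueof{N}{s_0} \ge \valueof{\reach{\top}}{s_0} - \eps/8$ for all $s_0 \in S_0$, a strategy $\sigma^*$ with $\inf_\pi \probm_{\game,s_0,\sigma^*,\pi}(\reachn{N}{\top}) \ge \valueof{N}{s_0} - \eps/8$, and $\delta$ with $(1-\delta)^N \ge 1-\eps/4$; since histories reaching $\top$ within $N$ steps lose at most a factor $(1-\delta)^N$ of probability in $\game_\delta$, this gives $v_\delta(s_0) \ge \valueof{\reach{\top}}{s_0} - \eps/2$ as you need. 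As written, however, the proposal leaves its central step unproven.
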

\begin{remark}
The assumption that Min's action sets are finite can be replaced by the assumption that $S$ is finite.
The proof requires an extension of \cref{lemma:mart}.
\end{remark}

\Cref{proposition:reach-eps} does not carry over to countably infinite
reachability games with infinite action sets for Min; see \Cref{sec:conclusion}.


For the rest of the section we assume that Min's action sets are finite.
Towards a proof of \cref{proposition:reach-eps} we consider, for $n \ge 0$, the horizon-restricted reachability objective $\reachn{n}{\top} \subseteq \reach{\top}$ consisting of the plays that reach~$\top$ within at most $n$ steps.
For $s \in S$ write $\valueof{n}{s}$ for $\valueof{\reachn{n}{\top}}{s}$.
We have $\valueof{0}{\top} = 1$ and $\valueof{0}{s} = 0$ for all $s \ne \top$.
For all $n \ge 0$, since $\reachn{n}{\top} \subseteq \reachn{n+1}{\top}$, we have for all $s \in S$
\begin{align*}
 \valueof{n}{s} \ \le \ \valueof{n+1}{s} \ &= \ \sup_{\alpha \in \dist(A(s))} \inf_{\beta \in \dist(B(s))} \sca{p(s,\alpha,\beta)}{\valueo{n}} \\
                    &= \ \sup_{\alpha \in \dist(A(s))} \min_{b \in B(s)} \sca{p(s,\alpha,b)}{\valueo{n}} \,,
\end{align*}
as Min's action sets are finite.
For all $s \in S$, since $\valueof{0}{s} \le \valueof{1}{s} \le \ldots \le 1$, there is a limit $\valueof{\infty}{s} \eqdef \lim_{n \to \infty} \valueof{n}{s}$.

\begin{lemma} \label{lemma:v_infty-fixed}
Suppose that Min's action sets are finite.
For all $s \in S$ we have
\[
 \valueof{\infty}{s} \ = \ \sup_{\alpha \in \dist(A(s))} \min_{b \in B(s)} \sca{p(s,\alpha,b)}{\valueo{\infty}} \,.
\]
\end{lemma}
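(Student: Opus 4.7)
The plan is to take $n \to \infty$ in the already-established recursion
\[
 \valueof{n+1}{s} \ = \ \sup_{\alpha \in \dist(A(s))} \min_{b \in B(s)} \sca{p(s,\alpha,b)}{\valueo{n}}\,,
\]
and to show both inequalities of the claim separately. The only piece of analysis needed is to justify exchanging the minimum over $B(s)$ with the monotone limit over $n$; this is exactly where the hypothesis that Min's action sets are finite will be used.

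First, I would note that for each fixed $\alpha \in \dist(A(s))$ and $b \in B(s)$, the function $\valueo{n}$ is bounded in $[0,1]$ and $\valueo{n} \uparrow \valueo{\infty}$ pointwise, so by monotone convergence applied to the countable sum defining $\sca{p(s,\alpha,b)}{\cdot}$,
\[
 \sca{p(s,\alpha,b)}{\valueo{n}} \ \uparrow \ \sca{p(s,\alpha,b)}{\valueo{\infty}} \quad \text{as } n \to \infty\,.
\]
Because $B(s)$ is finite, taking a minimum over $b \in B(s)$ commutes with the monotone limit over $n$, giving
\[
 \min_{b \in B(s)} \sca{p(s,\alpha,b)}{\valueo{n}} \ \uparrow \ \min_{b \in B(s)} \sca{p(s,\alpha,b)}{\valueo{\infty}}\,.
\]

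For the inequality $\valueof{\infty}{s} \le \sup_\alpha \min_b \sca{p(s,\alpha,b)}{\valueo{\infty}}$, I would use that $\valueo{n} \le \valueo{\infty}$ pointwise to obtain, for every $n$,
\[
 \valueof{n+1}{s} \ = \ \sup_{\alpha} \min_{b} \sca{p(s,\alpha,b)}{\valueo{n}} \ \le \ \sup_{\alpha} \min_{b} \sca{p(s,\alpha,b)}{\valueo{\infty}}\,,
\]
and then let $n \to \infty$ on the left. For the reverse inequality, I would fix any $\alpha \in \dist(A(s))$ and use the commutation observed above:
\[
 \min_{b} \sca{p(s,\alpha,b)}{\valueo{\infty}} \ = \ \lim_{n \to \infty} \min_{b} \sca{p(s,\alpha,b)}{\valueo{n}} \ \le \ \lim_{n \to \infty} \valueof{n+1}{s} \ = \ \valueof{\infty}{s}\,,
\]
and then take the supremum over~$\alpha$ on the left.

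The only even mildly delicate step is the min/limit exchange in the second displayed equation, and this is immediate once $B(s)$ is finite, since a finite minimum of increasing sequences increases to the finite minimum of their limits. No other subtleties arise; the lower-semicontinuity/compactness issues that make supremum/limit exchanges dangerous for $\sup_\alpha$ are avoided because in both directions we only ever pass the sup through an inequality, not through a limit.
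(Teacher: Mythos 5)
Your proof is correct and takes essentially the same route as the paper's: both directions are handled separately from the recursion, with the ``$\ge$'' direction resting on exchanging the minimum over the finite set $B(s)$ with the monotone limit in $n$, followed by taking the supremum over $\alpha$ through an inequality. The only cosmetic difference is that you justify the inner convergence $\sca{p(s,\alpha,b)}{\valueo{n}} \to \sca{p(s,\alpha,b)}{\valueo{\infty}}$ via monotone convergence of the countable sum (if anything, slightly more careful for countable $S$), whereas the paper invokes continuity of the linear map $\sca{p(s,\alpha,b)}{\cdot}$.
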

\begin{proof}
Let $s \in S$.
Towards the ``$\le$'' inequality, for any~$n$ we have
\begin{align*}
 \valueof{n+1}{s} \ &= \  \sup_{\alpha \in \dist(A(s))} \min_{b \in B(s)} \sca{p(s,\alpha,b)}{\valueo{n}} \\
                    &\le\ \sup_{\alpha \in \dist(A(s))} \min_{b \in B(s)} \sca{p(s,\alpha,b)}{\valueo{\infty}}\,.
\end{align*}
Thus, $\valueof{\infty}{s}  \le \sup_{\alpha \in \dist(A(s))} \min_{b \in B(s)} \sca{p(s,\alpha,b)}{\valueo{\infty}}$.

Towards the ``$\ge$'' inequality, for any fixed~$\alpha$ and~$b$ the function $\sca{p(s,\alpha,b)}{\cdot} : \mathbb{R}^S \to \mathbb{R}$ is a linear map and, thus, continuous.
Hence, for any $\alpha$
\begin{align*}
   \lim_{n \to \infty} \min_{b \in B(s)} \sca{p(s,\alpha,b)}{\valueo{n}} \
   &= \ \min_{b \in B(s)} \lim_{n \to \infty} \sca{p(s,\alpha,b)}{\valueo{n}} \\
   &= \ \min_{b \in B(s)} \sca{p(s,\alpha,b)}{\valueo{\infty}}\,.
\end{align*}
It follows that
\begin{align*}
 \valueof{\infty}{s} \
 =\   \lim_{n \to \infty} \valueof{n+1}{s} \
 &=\   \lim_{n \to \infty} \sup_{\alpha \in \dist(A(s))} \min_{b \in B(s)} \sca{p(s,\alpha,b)}{\valueo{n}} \\
 &\ge\ \sup_{\alpha \in \dist(A(s))} \lim_{n \to \infty} \min_{b \in B(s)} \sca{p(s,\alpha,b)}{\valueo{n}} \\
 &=\   \sup_{\alpha \in \dist(A(s))} \min_{b \in B(s)} \sca{p(s,\alpha,b)}{\valueo{\infty}}\,. \tag*{\qed}
\end{align*}
\end{proof}

Now the following lemma follows from \cref{lemma:mart}.
\begin{lemma} \label{lemma:v_infty=val}
Suppose that Min's action sets are finite.
Then $\valueo{\reach{\top}} = \valueo{\infty}$.
\end{lemma}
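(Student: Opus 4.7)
The plan is to establish the two pointwise inequalities separately. The direction $\valueo{\infty}(s) \le \valueof{\reach{\top}}{s}$ is immediate from the monotone containment $\reachn{n}{\top} \subseteq \reach{\top}$: this gives $\valueof{n}{s} \le \valueof{\reach{\top}}{s}$ for every $n$, and the limit inherits the bound.

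The substantive direction is $\valueof{\reach{\top}}{s} \le \valueo{\infty}(s)$, and here my approach is to dualise: rather than bounding $\probm(\reach{\top})$ from above directly, I bound $\probm(\avoid{\top})$ from below and then complement. Concretely, for every Max strategy $\sigma$ I will exhibit a Min strategy~$\pi$ with $\probm_{s_0,\sigma,\pi}(\avoid{\top}) \ge 1 - \valueof{\infty}{s_0}$; taking the complementary event and then the supremum over~$\sigma$ yields $\valueoflower{\reach{\top}}{s_0} \le \valueof{\infty}{s_0}$, and since the value exists under the hypotheses this settles the lemma.

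To construct~$\pi$, I set $v(s) \eqdef 1 - \valueof{\infty}{s}$, noting $v(\top) = 0$ and $v(s) \in [0,1]$. The finiteness of Min's action sets, combined with \cref{lemma:v_infty-fixed}, guarantees that for each state~$s$ and each mixed Max action $\alpha \in \dist(A(s))$ the minimum $\min_{b \in B(s)} \sca{p(s,\alpha,b)}{\valueo{\infty}}$ is attained and is at most $\valueof{\infty}{s}$; equivalently, some $b \in B(s)$ satisfies $\sca{p(s,\alpha,b)}{v} \ge v(s)$. So I define~$\pi$ along the play by picking, at each history~$h$, any $\pi(h) \in \argmin_{b \in B(s_h)} \sca{p(s_h, \sigma(h), b)}{\valueo{\infty}}$.

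With this~$\pi$, the supermartingale hypothesis of \cref{lemma:mart} holds at every history, with $\top$ now playing the role of~$\bot$: the proof of that lemma only uses that the distinguished state is a sink and that~$v$ vanishes on it, so it transfers verbatim. The lemma then delivers $\probm_{s_0,\sigma,\pi}(\avoid{\top}) \ge v(s_0)$, whence $\probm_{s_0,\sigma,\pi}(\reach{\top}) \le \valueof{\infty}{s_0}$. The only real insight is that $v = 1 - \valueo{\infty}$ turns the fixed-point identity of \cref{lemma:v_infty-fixed} into exactly the supermartingale condition needed; no new machinery is required and I expect no significant obstacle, apart from checking that reusing \cref{lemma:mart} with~$\top$ in place of~$\bot$ is legitimate.
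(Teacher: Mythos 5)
Your proposal is correct and follows essentially the same route as the paper's own proof: the easy inequality from monotonicity, then for the other direction setting $v = 1 - \valueo{\infty}$, building the Min strategy from $\argmin_{b \in B(s_h)} \sca{p(s_h,\sigma(h),b)}{\valueo{\infty}}$ via \cref{lemma:v_infty-fixed}, and applying \cref{lemma:mart} with $\top$ in the role of $\bot$. The only cosmetic remark is that the hypothesis of \cref{lemma:mart} is a \emph{sub}martingale condition on $v$ (equivalently a supermartingale condition on $\valueo{\infty}$), and your observation that the lemma only needs the distinguished state to be a sink on which $v$ vanishes is exactly the justification the paper uses implicitly.
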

\begin{proof}
Since $\reach{\top} \supseteq \reachn{n}{\top}$, we have $\valueo{\reach{\top}} \ge \valueo{n}$.
Towards the other inequality, define a function $v : S \to [0,1]$ by $v(s) \eqdef 1 - \valueof{\infty}{s}$.
We have $v(\top) = 1 - \valueof{\infty}{\top} = 0$.
Let $\sigma$ be an arbitrary Max strategy.
For every history~$h$, let $b_h \in \argmin_{b \in B(s_h)} \sca{p(s_h,\sigma(h),b)}{\valueo{\infty}}$.
Define a Min strategy~$\pi$ with $\pi(h)(b_h) = 1$ for all histories~$h$.
For all histories~$h$ we have
\begin{align*}
 \sca{p(s_h,\sigma(h),\pi(h))}{v} \
 &= \   1 - \sca{p(s_h,\sigma(h),\pi(h))}{\valueo{\infty}} && \text{from the definition of~$v$} \\
 &\ge\  1 - \valueof{\infty}{s_h} && \text{\cref{lemma:v_infty-fixed}} \\
 & =\   v(s_h) && \text{definition of~$v$.}
\end{align*}
By \cref{lemma:mart}, for all $s \in S$
\[
 \probm_{s,\sigma,\pi}(\reach{\top}) \ =\ 1 - \probm_{s,\sigma,\pi}(\avoid{\top}) \ \le\ 1 - v(s) = \valueof{\infty}{s}\,.
\]
Since $\sigma$ was arbitrary, we conclude that $\valueo{\reach{\top}} \le \valueo{\infty}$.
\qed
\end{proof}

Let us introduce an operation on games which makes transitions ``leak'' to~$\bot$.
The intention is to ``reduce'' reachability to safety, in the sense that in a leaky game, $\avoid{\bot}$ is included in (and, hence, equal to) $\reach{\top}$ up to a measure-zero set of plays.
We set up the leaky game so that Max has an optimal strategy for~$\avoid{\bot}$, which, according to \cref{proposition:safety-memoryless}, can be chosen to be memoryless.

For technical reasons, which will manifest themselves later, we associate the leaks with Min actions.
Concretely, for a state~$s$, a Min action $b \in B(s)$, and some $\eps > 0$, by \emph{making~$b$ leak~$\eps$} we refer to obtaining from~$p$ another transition function $\check{p}$ by setting, for all $a \in A(s)$ and $t \in S$,
\[
 \check{p}(s,a,b)(t) \ \eqdef \ \begin{cases} (1-\eps) p(s,a,b)(t) & \text{if } t \ne \bot \\ (1-\eps) p(s,a,b)(\bot) + \eps & \text{if } t= \bot\,. \end{cases}
\]
Intuitively, a fraction of~$\eps$ of the probability mass leaks to~$\bot$ whenever $b$~is taken.
We use leaks to prove \cref{proposition:reach-eps}.

\begin{proof}[of \cref{proposition:reach-eps}]
Let $\eps > 0$, and let $S_0 \subseteq S$ be finite.
Choose $\eps_1, \eps_2, \eps_3 > 0$ such that $\eps_1 + \eps_2 + \eps_3 = \eps$.
For each $s \in S$ choose $a(s) \in A(s)$.
By \cref{lemma:v_infty=val}, since $S_0$ is finite, there is $n \ge 0$ such that $\valueof{n}{s} \ge \valueof{\reach{\top}}{s} - \eps_1$ holds for all $s \in S_0$.
Inductively define (in general non-memoryless) Max strategies $\sigma_0, \ldots, \sigma_n$ as follows.
For each history~$h$, define $\sigma_0(h)(a(s_h)) \eqdef 1$.
For $i \in \{0, \ldots, n-1\}$ and each state $s \in S$ we have $\sup_{\alpha \in \dist(A(s))} \min_{b \in B(s)} \sca{p(s,\alpha,b)}{\valueo{i}} = \valueof{i+1}{s}$; thus we can define $\sigma_{i+1}(s)$ so that $\min_{b \in B(s)} \sca{p(s,\sigma_{i+1}(s),b)}{\valueo{i}} \ge \valueof{i+1}{s} - \frac{\eps_2}{n}$.
Since $\frac{\eps_2}{n} > 0$, we can assume without loss of generality that the support of $\sigma_{i+1}(s)$ is finite.
Finally, define $\sigma_{i+1}(z h) \eqdef \sigma_i(h)$ for all $z \in Z$ and $h \in H$.

We show inductively that for all $i \in \{1, \ldots, n\}$ and all $s \in S$ 
\begin{equation} \label{eq:proposition:reach-eps-1}
 \inf_{\pi \in \ostratset} \probm_{\game,s,\sigma_i,\pi}(\reachn{i}{\top}) \ \ge \ \valueof{i}{s} - i \cdot \frac{\eps_2}{n} \,.
\end{equation}
This is immediate for $i=0$.
For $i \in \{0, \ldots, n-1\}$ we have
\begin{align*}
& \inf_{\pi \in \ostratset} \probm_{\game,s,\sigma_{i+1},\pi}(\reachn{i+1}{\top}) \\
& \ge\ \min_{b \in B(s)} \sca{p(s,\sigma_{i+1}(s),b)}{\inf_{\pi \in \ostratset} \probm_{\game,\cdot,\sigma_i,\pi}(\reachn{i}{\top})} \\
& \ge\ \min_{b \in B(s)} \sca{p(s,\sigma_{i+1}(s),b)}{\valueo{i}} - i \cdot \frac{\eps_2}{n} && \text{induction hypothesis} \\
& \ge\ \valueof{i+1}{s} - \frac{\eps_2}{n} - i \cdot \frac{\eps_2}{n} && \text{definition of } \sigma_{i+1}(s)\,,
\end{align*}
proving~\eqref{eq:proposition:reach-eps-1}.

Now in every state except~$\top$, make each Min action leak~$\frac{\eps_3}{n}$.
Further, remove any Max actions that $\sigma_n$ never takes.
Call the resulting game~$\game^-$.
Then Max's action sets in~$\game^-$ are finite and we have for all $s_0 \in S$ and all Min strategies~$\pi$
\begin{align*}
 & \probm_{\game^-,s,\sigma_n,\pi}(\avoid{\bot}) \\
 & \ge \ \probm_{\game^-,s,\sigma_n,\pi}(\reachn{n}{\top}) && \avoid{\bot} \supseteq \reachn{n}{\top} \\
 & \ge \ \probm_{\game,s,\sigma_n,\pi}(\reachn{n}{\top}) - \eps_3 && \text{at most $n \cdot \frac{\eps_3}{n}$ leaks in $n$ steps} \\
 & \ge \ \valueof{n}{s} - \eps_2 - \eps_3 && \text{by~\eqref{eq:proposition:reach-eps-1}} \\
 & \ge \ \valueof{\game,\reach{\top}}{s} - \eps_1 - \eps_2 - \eps_3 && \text{choice of $n$} \\
 & =   \ \valueof{\game,\reach{\top}}{s} - \eps && \text{choice of $\eps_1, \eps_2, \eps_3$.}
\end{align*}
Thus, for all $s \in S_0$
\begin{equation} \label{eq:proposition:reach-eps-2}
 \valueof{\game^-,\avoid{\bot}}{s} \ \ge \ \valueof{\game,\reach{\top}}{s} - \eps\,.
\end{equation}
Due to the leaks, in~$\game^-$ the events $\reach{\top}$ and $\avoid{\bot}$ coincide up to measure zero for all Max and all Min strategies.
Since Max's action sets in~$\game^-$ are finite, by \cref{proposition:safety-memoryless} Max has a memoryless strategy~$\sigma$ that is optimal for $\avoid{\bot}$ from every state.
Thus, for all $s \in S_0$ and all Min strategies~$\pi$
\begin{align*}
  \probm_{\game,s,\sigma,\pi}(\reach{\top}) \
 & \ge \ \probm_{\game^-,s,\sigma,\pi}(\reach{\top}) \\
 &  =  \ \probm_{\game^-,s,\sigma,\pi}(\avoid{\bot}) && \text{as argued above} \\
 & \ge \ \valueof{\game^-,\avoid{\bot}}{s} && \text{$\sigma$ is optimal} \\
 & \ge \ \valueof{\game,\reach{\top}}{s} - \eps && \text{by \eqref{eq:proposition:reach-eps-2}.}
\end{align*}
That is, the memoryless strategy $\sigma$ is $\eps$-optimal from all $s \in S_0$, and for each $s \in S$ the support of $\sigma(s)$ is finite.
\qed
\end{proof}

\section{Optimal Reachability}

In this section we show the following result, a generalization of Bordais et al.~\cite[Theorem~28]{BordaisB022}, via a different proof.

\begin{theorem} \label{theorem:main}
Suppose that $S$ and Min's action sets are finite.
Then for every $\eps > 0$ Max has a memoryless strategy~$\sigma$ that is $\eps$-optimal for~$\reach{\top}$ from all states and optimal from all states from which Max has any optimal strategy.
\end{theorem}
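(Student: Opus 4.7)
The plan is to identify the set $W \subseteq S$ of states from which Max has an optimal strategy, pick a ``Bellman-attaining'' mixed action at each $s \in W$, and combine these with the $\eps$-optimal memoryless strategy from~\cref{proposition:reach-eps} via a tailored leaky game.

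First, I will show that for each $s \in W$ with optimal Max strategy $\sigma^*$, the mixed action $\alpha_s := \sigma^*(s)$ satisfies $\min_{b \in B(s)} \sca{p(s,\alpha_s,b)}{v} = v(s)$, where $v := \valueo{\reach{\top}}$. For ``$\geq$'', fix $\eta > 0$ and for each successor $s'$ pick a Min strategy $\pi_{s'}$ with $\sup_\sigma \probm_{s',\sigma,\pi_{s'}}(\reach{\top}) \leq v(s') + \eta$ (using $v = \valueofupper{\reach{\top}}{\cdot}$); against the Min strategy that plays $b$ first and then $\pi_{s'}$ from $s'$, optimality of $\sigma^*$ yields $\sca{p(s,\alpha_s,b)}{v} + \eta \geq v(s)$, and $\eta \to 0$ completes the argument. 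The ``$\leq$'' direction is the Bellman equation~\eqref{eq:value}. Moreover, when $b$ attains the minimum, writing $r_{s'} := \inf_{\pi'} \probm_{s',\sigma^*_{s'},\pi'}(\reach{\top})$, the two estimates $r_{s'} \leq v(s')$ and $\sum_{s'} p(s,\alpha_s,b)(s')\,r_{s'} \geq v(s) = \sum_{s'} p(s,\alpha_s,b)(s')\,v(s')$ force $r_{s'} = v(s')$ for each $s'$ in the support, so the continuation $\sigma^*_{s'}$ is optimal from $s'$, giving $s' \in W$.

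Next, I will construct a leaky game $\game^-$ by making each Min action at each state $s \in S \setminus (W \cup \{\top\})$ leak some $\eps' > 0$ (chosen in terms of $\eps$ and $|S|$) to $\bot$; states in $W \cup \{\top\}$ stay unchanged. Restricting Max's action set at $s \in W$ to the (finite) support of $\alpha_s$ and at $s \notin W$ to the finite support of the $\eps'$-optimal memoryless strategy from~\cref{proposition:reach-eps}, Max's action sets in $\game^-$ become finite, so~\cref{proposition:safety-memoryless} yields a memoryless Max strategy $\sigma$ optimal for $\avoid{\bot}$ in $\game^-$. I claim $\sigma$ is the desired strategy. $\eps$-optimality from every state follows as in the proof of~\cref{proposition:reach-eps}: $\avoid{\bot}$ and $\reach{\top}$ coincide almost surely in $\game^-$, and $\valueof{\game^-,\avoid{\bot}}{s_0} \geq v(s_0) - \eps$ by choice of $\eps'$. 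For optimality from $s_0 \in W$ I will show $\valueof{\game^-,\avoid{\bot}}{s_0} \geq v(s_0)$ by exhibiting a witness Max strategy in $\game^-$ that plays $\alpha_s$ at each visited $s \in W$. Against a \emph{minimising} Min response the structural fact from Step~1 keeps the play inside~$W$, where no leak ever fires, and~\cref{lemma:mart} with potential~$v$ yields $\probm_{\game^-,s_0,\cdot,\pi}(\avoid{\bot}) \geq v(s_0)$.

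\paragraph{Main obstacle.} The difficulty will be to handle Min strategies that play \emph{non-minimising} responses at $W$-states, since these can push the play into $S \setminus W$ where leaks fire and threaten the $\avoid{\bot}$-value. The key resource is the \emph{slack} $\sca{p(s,\alpha_s,b)}{v} - v(s) > 0$ at non-minimising~$b$ on $s \in W$, which provides a strict margin above~$v(s)$ at the exit moment from~$W$. The technical heart of the proof will be a stopped-martingale argument (a variant of~\cref{lemma:mart} with stopping time the first exit from~$W$) showing that this accumulated slack compensates for the expected leak losses thereafter. This accounting is where the argument substantively differs from the proof of~\cref{proposition:reach-eps}.
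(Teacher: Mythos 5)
Your proposal has a genuine gap, and it is not the obstacle you flag at the end. The difficulty you identify (Min playing non-minimising actions at $W$-states and pushing the play into the leaky region) is in fact the easy part: since $S$ and Min's action sets are finite, the slack $\sca{p(s,\alpha_s,b)}{\valueo{}} - \valueof{}{s}$ of value-increasing actions is uniformly bounded below, and choosing the leak rate below that bound makes the submartingale inequality go through---this is exactly how the paper handles it (it even adds leaks to those actions). The real problem is Min playing \emph{minimising} (value-preserving) responses forever. In your game~$\game^-$ the states of~$W$ do not leak, so a play that stays in~$W$ forever avoids~$\bot$ regardless of whether it ever reaches~$\top$. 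Hence your claim that ``$\avoid{\bot}$ and $\reach{\top}$ coincide almost surely in $\game^-$'' is false---the argument in the proof of \cref{proposition:reach-eps} needs every state except~$\top$ to leak---and a strategy optimal for $\avoid{\bot}$ in~$\game^-$ need not be any good for~$\reach{\top}$.

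Concretely, the flaw surfaces in your choice $\alpha_s := \sigma^*(s)$. Take a state~$s$ of value~$\frac12$ where Max has two actions: $a_1$ loops on~$s$ surely and $a_2$ moves to $\top$ or~$\bot$ with probability $\frac12$ each (Min has a single action~$b$). The strategy ``play $a_1$ ten times, then play $a_2$'' is optimal, and its first action~$a_1$ satisfies both of your Step-1 conditions ($\sca{p(s,a_1,b)}{\valueo{}} = \valueof{}{s}$ and support $\{s\} \subseteq W$), yet the memoryless strategy playing $a_1$ forever has reachability value~$0$; in your~$\game^-$ it is even \emph{safety-optimal}, since it avoids~$\bot$ surely, so your final~$\sigma$ can be exactly this useless strategy. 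What is missing is a \emph{progress} requirement on~$\alpha_s$: the paper builds the hierarchy $R(0) = \{\top\} \subseteq R(1) \subseteq \cdots$ and chooses $\alpha_s$ so that every value-preserving Min action moves the play down the hierarchy with probability at least $\delta_s > 0$; \cref{lemma:R=S0+} shows that such actions exist at all positive-value states of~$W$ (this also uses the preliminary reduction making $\bot$ the only value-$0$ state, which your proposal skips), and \cref{lemma:stuck-in-S0+} then shows that plays remaining in~$W$ under value-preserving Min play reach~$\top$ almost surely. Together with leaking the value-increasing Min actions inside~$W$, this is what restores the identification of $\avoid{\bot}$ with $\reach{\top}$ that your argument needs. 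Without some equivalent of this progress argument, your construction does not prove the theorem.
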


\begin{figure}[tbp]
\begin{center}
\begin{minipage}[l]{0.35\textwidth}

\begin{tabular}{c|cc}
   transition& \multirow{2}{*}{$s_1$} & \multirow{2}{*}{ $\bot$}  \\
    function~$p$&& \\
 \hline
 $p(s_0,a,b_q)$ & $q \; $  &  $\;  1-q$\\
  \end{tabular}

  \end{minipage}
\begin{minipage}[r]{0.6\textwidth}
\begin{tikzpicture}[scale=0.8,xscale=1.6,yscale=1.3]

\draw [draw=black, dashed] (-.7,1.4) rectangle (2.5,-2);

\node[Grand] (s0) at (-3,0) {$s_0$};
\node[Grand] (s) at (0,0) {$s_1$};
\node[Grand] (top) at (2,0) {$\top$};
\node[Grand] (bot) at (0,-1.5) {$\bot$};

\draw[->] (s) edge node[pos=.3, left=.15cm,fill=blue!40!white,rectangle,inner sep=2]{$r$} node[pos=.7, left=.07cm,fill=red!40!white,diamond, inner sep=1]{$t$}   (bot);
\draw[->] (s) edge node[above left=.15cm and .3cm,fill=blue!40!white,rectangle,inner sep=2]{$r$} node[above=.1cm,fill=red!40!white,diamond, inner sep=1]{$w$}
node[below left=.15cm and .3cm,fill=blue!40!white,rectangle,inner sep=2]{$h$} node[below=.1cm,fill=red!40!white,diamond, inner sep=1]{$t$}(top);

\draw[->] (s) edge[loop above] node[above left=.15cm and .1cm,fill=blue!40!white,rectangle,inner sep=2]{$h$} node[above right=.16cm and .1cm,fill=red!40!white,diamond, inner sep=1]{$w$}   (s);

\draw[->] (s0) edge node[pos=.3, above=.2cm,fill=blue!40!white,rectangle,inner sep=2]{$a$} node[pos=.45, above=.07cm,fill=red!40!white,diamond, inner sep=0]{$b_q$} node[pos=.6, above=.16cm,inner sep=2]{$:q$}  (s);

\draw[->] (s0) edge node[pos=.3, below=.2cm,fill=blue!40!white,rectangle,inner sep=2,rotate=-21]{$a$} node[pos=.45, below=.07cm,fill=red!40!white,diamond, inner sep=0,rotate=-18]{$b_q$} node[pos=.65, below=.15cm,inner sep=2,rotate=-18]{$:1-q$}  (bot);

\end{tikzpicture}
\end{minipage}	
\end{center}	
\caption{A game where Max has no \emph{memoryless} optimal  strategy for~$\reach{\top}$. The states~$\top$ and $\bot$ are sink states.
Min's action set  at $s_0$ is $\{b_q \mid q \in (\frac12,1) \cap \mathbb{Q}\}$, while Max has a single action~$a$ at $s_0$.
The game in dashed box is the snowball game from~\Cref{fig:hiderun}.  }
\label{fig:newsnowball}
\end{figure}
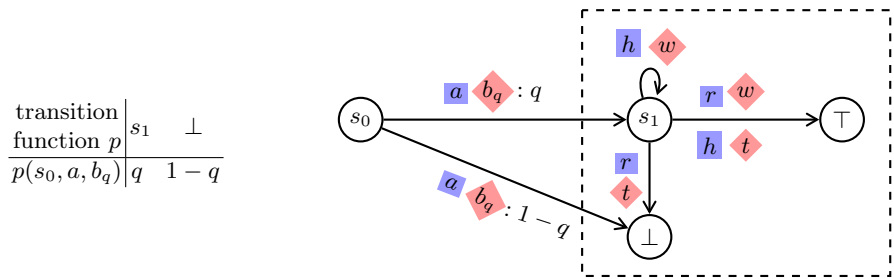

The assumption that Min's action sets are finite cannot be dropped, as the following example shows.

\begin{example}
Consider the game depicted in~\Cref{fig:newsnowball}, wherein
state~$s_1$ is the start state of the snowball game from \cref{example:snowball}. Recall that
 $\valueof{\reach{\top}}{s_1} = 1$, but Max does not have an optimal strategy from~$s_1$.
By this, and the fact that Min can choose $b_q$ for $q$ arbitrary close to $\frac12$, we deduce that  $\valueof{\reach{\top}}{s_0} = \frac12$. Observe that when  taking action $b_q$ at $s_0$, Min is increasing the  value from $\frac12$ to $q > \frac12$. Indeed, we claim that
Max has an optimal strategy from~$s_0$. If Max observes which action $b_q$ Min takes in~$s_0$, and  plays a $(q-\frac12)$-optimal strategy from~$s_1$, the probability of reaching~$\top$ will be at least $q \cdot (1 - (q-\frac12)) \ge q - (q-\frac12) = \frac12$.
But Max has to ``remember'' Min's value increase in order to play sufficiently well in~$s_1$.
So Max does not have a memoryless optimal strategy from~$s_0$.
In fact, a step counter (discrete clock) would not help Max either, since the
step counter value does not contain any information about $q$.
\qed
\end{example}

\begin{figure}[tbp]
\begin{center}
\includegraphics[width=\textwidth, angle=0]{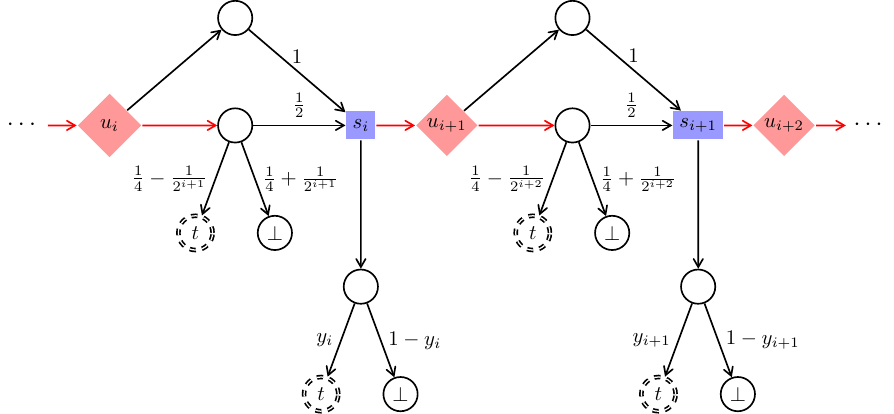}
\end{center}
\caption{A finitely-branching turn-based reachability game $\game$ with
  initial state $u_1$, where optimal Max strategies cannot be memoryless (and cannot  even be Markov).
  For clarity, we have drawn several copies of the target state $t$.
The number $y_i$ is defined as $\frac12 - \frac1{2^{i+1}}$.}
\label{fig:turn-fb-optmax-lower}
\end{figure}

The assumption that $S$ is finite cannot be dropped either,
not even for the subclass of turn-based games,
as the following example shows.

\begin{example}
Consider the finitely-branching turn-based reachability game depicted in
\Cref{fig:turn-fb-optmax-lower} (from \cite[Section 9]{KMSTW:DGA}).
The initial state is $u_1$, states $u_i$ are Min-controlled, states
$s_i$ are Max-controlled and $t$ is the target state.

At every state $u_i$, Min can either go right (red transition) or go up.
At every state $s_i$, Max can either go right (red transition) or go down.
It is easy to show that $\valueof{\game}{s_i} > \valueof{\game}{u_i}$ for all $i$.
Thus the only locally optimal Min move is to go right.
Similarly, $\valueof{\game}{s_i} > y_i$, and thus the only locally optimal Max move is to go right.

Max has an optimal strategy from $u_1$ (and also from every other state)
as follows. First, always go right. If Min ever goes up at some $u_i$
then go down at~$s_i$. Intuitively, by going up, Min increases the value
(i.e., gives a ``gift'' to Max). By going down at $s_i$, Max plays
sub-optimally locally, but realizes most of the value of $s_i$, i.e., loses
less than Min's previous gift.

However, no memoryless strategy (and not even any Markov strategy, i.e., any strategy that uses only a step counter) can be
optimal for Max from $u_1$.
First, a step counter gives no advantage to Max, since in this example
the step counter is implicit in the current state anyway.
For any memoryless Max strategy, there are to cases.
Either this strategy never goes down with any positive probability at any state $s_i$.
Then Min can avoid the target $t$ completely by always playing up,
and thus this Max strategy is not optimal.
Otherwise, let $s_j$ be first state where Max goes down with some
positive probability.
This Max strategy is not optimal either, since Min can always go
right (locally optimal), and Max's choice at $s_j$ is locally sub-optimal.

See \cite[Section 9]{KMSTW:DGA} for a formal proof of this example.
It is also shown there that,
in every countably infinite finitely-branching turn-based reachability game,
optimal Max strategies (if they exist at all) can be chosen as
deterministic and using a step counter plus 1 bit of public memory.
So the example above is tight.
\qed
\end{example}

Now we prove \cref{theorem:main}.
Let $S$ and Min's action sets be finite.
We partition the finite state space~$S$ into $S_0, S_1$ so that $S_0$~contains exactly the states from which Max has an optimal strategy for~$\reach{\top}$, and $S_1$~contains exactly the states from which Max does not have an optimal strategy.
Clearly, $\top, \bot \in S_0$.

By the finiteness of Min's action sets and \eqref{eq:value}, for every value-$0$ state~$s$ and every mixed Max action $\alpha \in \dist(A(s))$ Min has an action~$b$ that keeps the game (surely) in a value-$0$ state.
Therefore, an optimal Max strategy cannot rely on Min ``gifts'' in value-$0$ states.
Hence, we can assume without loss of generality that $\bot$~is the only value-$0$ state.
We write $S_0^+ \eqdef S_0 \setminus \{\bot\}$, so that all states in $S_0^+$ have a positive value.

The challenge is that the play can ``cross over'' between $S_0$ and~$S_1$.
Our approach is as follows.
First we fix a memoryless strategy~$\sigma_0$ on~$S_0$ that is optimal for Max as long as Min responds optimally.
Then, building on the idea of the previous section, we add ``leaks'' to the game so that
\begin{itemize}
\item in~$S_0$ the leaks do not decrease the value and Max still has an optimal strategy;
\item in~$S_1$ the leaks decrease the value only by little.
\end{itemize}
After having fixed~$\sigma_0$ on~$S_0$ and introduced suitable leaks, an optimal safety strategy for $\avoid{\bot}$ will serve as the memoryless strategy~$\sigma$ claimed in \cref{theorem:main}.
This strategy extends~$\sigma_0$ to the whole state space and is optimal from~$S_0$ (no matter what Min does).

Below, to avoid clutter whenever we write $\valueof{}{s}$ we mean $\valueof{\game, \reach{\top}}{s}$ where $\game$ is the original game, even when other auxiliary games are discussed.

We start by defining a memoryless strategy, $\sigma_0$, only on~$S_0$ so that $\sigma_0$~is optimal from~$S_0$ as long as Min ``does not increase the value''.

For a state $s \in S_0$, call a mixed Max action $\alpha \in \dist(A(s))$ \emph{optimality-preserving} if for all Min actions $b \in B(s)$ we have $\sca{p(s,\alpha,b)}{\valueo{}} \ge \valueof{}{s}$ and if $\sca{p(s,\alpha,b)}{\valueo{}} = \valueof{}{s}$ then the support of $p(s,\alpha,b)$ is a subset of~$S_0$.
Note that every optimal Max strategy from~$S_0$ is optimality-preserving at least in the first step.
Therefore, every state in~$S_0$ has an optimality-preserving mixed Max action.

For a state $s \in S_0$ and an optimality-preserving mixed action~$\alpha$, call a Min action $b \in B(s)$ \emph{value-preserving} if $\sca{p(s,\alpha,b)}{\valueo{}} = \valueof{}{s}$ and \emph{value-increasing} otherwise (i.e., $\sca{p(s,\alpha,b)}{\valueo{}} > \valueof{}{s})$.

We define inductively a non-decreasing sequence $R(0), R(1), R(2), \ldots$ of subsets of~$S_0$.
Define $R(0) \eqdef \{\top\}$.
For every $n \ge 0$, define $R(n+1)$ as $R(n)$ union the set of those states
$s \in S_0$ from which Max has an optimality-preserving mixed action,
say~$\alpha_s$, and a number $\delta_s > 0$ such that for all value-preserving
Min actions $b \in B(s)$ we have $p(s,\alpha_s,b)(R(n)) \ge
\delta_s$. We note in passing that the existence of~$\delta_s>0$
is guaranteed due to finiteness of Min's action set. Informally speaking,
$R(n+1)$ consists of those states that are already in~$R(n)$ or have an optimality-preserving mixed action~$\alpha_s$ so that in the next step $R(n)$ is entered with a positive probability unless Min takes a value-increasing action.
Defining $R \eqdef R(|S|-1)$, since $S$ is finite we have $R = R(|S|-1) = R(|S|) = R(|S|+1) =
\ldots$.
Moreover, we can  define $\delta \eqdef \min_{s \in R} \delta_s > 0$.

\begin{lemma} \label{lemma:R=S0+}
We have $R = S_0^+$.
\end{lemma}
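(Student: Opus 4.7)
The plan is first to establish the easy direction $R \subseteq S_0^+$, then to prove $S_0^+ \subseteq R$ by contradiction using the fixed-point nature of $R$ together with an exploiting Min strategy. For the easy direction, $R \subseteq S_0$ is immediate from the rule defining each $R(n+1)$. Moreover, $\bot \notin R(n)$ by an easy induction on $n$: $\bot \notin R(0) = \{\top\}$, and if $\bot \notin R(n)$, then since $\bot$ is a sink we have $p(\bot,\alpha,b)(R(n)) = 0$ for all $\alpha,b$, so $\bot$ cannot be added to $R(n+1)$. Hence $R \subseteq S_0 \setminus \{\bot\} = S_0^+$.

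For the reverse inclusion I would argue by contradiction: set $U := S_0^+ \setminus R$ and assume $U \neq \emptyset$. Since $S$ is finite the sequence $(R(n))_n$ has stabilized, so the inclusion rule has saturated. Combined with the finiteness of Min's action sets, this yields a sharp characterization: for every $s \in U$ and every optimality-preserving mixed action $\alpha \in \dist(A(s))$, there is at least one value-preserving Min action $b \in B(s)$ with $p(s,\alpha,b)(R) = 0$. Otherwise the minimum of $p(s,\alpha,b)(R)$ over the finitely many value-preserving $b$ would be strictly positive, placing $s$ into some $R(n+1)$ and contradicting $s \notin R$.

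Next I would pick $s^* \in U$. Since $s^* \in S_0$, Max has an optimal strategy $\sigma^*$ from $s^*$. I plan to argue, by a standard Bellman-style exploitation argument, that at every history $h$ reachable under $(\sigma^*,\pi)$ for some Min strategy $\pi$, the action $\sigma^*(h)$ must be optimality-preserving at $s_h$: the Bellman inequality part is immediate from optimality, while the equality clause (support inside $S_0$) must hold because otherwise, with positive probability, the play would enter a state in $S_1$ where $\sigma^*$'s conditional sub-strategy is not optimal, letting Min exploit the nonexistence of optimal strategies in $S_1$ to drive $\probm_{s^*,\sigma^*,\pi}(\reach{\top})$ strictly below $\valueof{}{s^*}$. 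Granted this, I would define a Min strategy $\pi^*$ that, at any history $h$ with $s_h \in U$, plays a value-preserving $b$ with $p(s_h,\sigma^*(h),b)(R) = 0$ (which exists by the previous paragraph). An induction along the play then shows that under $(\sigma^*,\pi^*)$ starting from $s^*$, the state stays almost surely in $U \cup \{\bot\}$: at each $s_h \in U$, optimality-preservation plus value-preservation places the support of $p(s_h,\sigma^*(h),\pi^*(h))$ inside $S_0$, while the choice of $\pi^*$ keeps it outside $R$, hence inside $S_0 \setminus R = U \cup \{\bot\}$. In particular $\top$ is never reached, so $\probm_{s^*,\sigma^*,\pi^*}(\reach{\top}) = 0$; but $\valueof{}{s^*} > 0$ since $s^* \in S_0^+$, contradicting optimality of $\sigma^*$. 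The main obstacle I anticipate will be making rigorous the equality clause of the optimality-preservation claim: the Bellman inequality itself is routine, but the support-in-$S_0$ part genuinely relies on the nonexistence of optimal Max strategies in $S_1$ and needs careful bookkeeping of conditional sub-strategies.
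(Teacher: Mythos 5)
Your proposal follows the same route as the paper's proof: the easy induction showing $\bot \notin R$ (hence $R \subseteq S_0^+$); the saturation-plus-finiteness argument producing, for each $s \in S_0^+ \setminus R$ and each optimality-preserving action $\alpha$, a value-preserving Min action $b$ with $p(s,\alpha,b)(R) = 0$; and the construction of a Min strategy that keeps the play of an optimal Max strategy inside $S_0 \setminus R$ forever, so that $\top$ is never reached, contradicting positivity of the value on $S_0^+$. (The paper phrases the second half not as a contradiction but as showing that any $s \in S_0 \setminus R$ has $\valueof{}{s} = 0$ and hence equals $\bot$; that difference is cosmetic.)

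There is, however, one genuine flaw: your key intermediate claim is overstated, and as stated it is false. It is not true that $\sigma^*(h)$ is optimality-preserving at \emph{every} history $h$ reachable under $(\sigma^*,\pi)$ for \emph{some} Min strategy $\pi$. That quantification admits histories in which Min has played value-increasing actions (``gifts''), and after a gift an optimal strategy may legitimately violate the Bellman inequality: it can afford to waste up to the gifted amount and still secure $\valueof{}{s^*}$ overall. For instance, in a turn-based game where Min at a state of value $\frac12$ can move to a state of value $1$, a Max strategy that from there deliberately moves to a state of value $\frac12$ and plays optimally afterwards is still optimal from the start state, yet its action after the gift is not optimality-preserving. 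So ``the Bellman inequality part is immediate from optimality'' is incorrect at such histories. The claim you actually need---and the only one you use, since your $\pi^*$ plays only value-preserving actions---is the restricted one: $\sigma^*(h)$ is optimality-preserving at every history along which Min has so far played only value-preserving actions, because along such histories the conditional sub-strategy of $\sigma^*$ remains optimal from $s_h$, and every optimal strategy is optimality-preserving in its first step (the exploitation argument you sketch, using nonexistence of optimal strategies in $S_1$, is exactly the right justification for the support-in-$S_0$ clause). This restricted formulation is precisely the one the paper uses (``$\sigma$ has to use optimality-preserving mixed actions as long as Min has taken only value-preserving actions''); with that correction your induction goes through unchanged.
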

\begin{proof}
By an easy induction, $\bot \not\in R$.
Hence $R \subseteq S_0^+$.
Towards the reverse inclusion, suppose there is $s \in S_0 \setminus R$.
It suffices to show that $s = \bot$.
Let $\sigma$ be an arbitrary optimal Max strategy starting from $s$.
Note that $\sigma(s)$ is optimality-preserving.
Since $s \not\in R = R(|S|)$, there is a value-preserving Min action $b \in B(s)$ such that the support of $p(s,\sigma(s),b)$ does not overlap with $R(|S|-1) = R$.
That is, the support of $p(s,\sigma(s),b)$ is a subset of $S_0 \setminus R$ and, thus, does not contain~$\top$.
Since $\sigma$ is optimal, $\sigma$~has to play optimally---and, thus, has to use optimality-preserving mixed actions---as long as Min has taken only value-preserving actions.
Then it follows inductively that there is a deterministic Min strategy (playing~$b$ in the very first step) that keeps the play in $S_0 \setminus R$ forever.
In particular, $\top$ is not reached.
Since the optimal strategy~$\sigma$ was arbitrary, we have $\valueof{}{s} = 0$; i.e., $s = \bot$.
\qed
\end{proof}

Define $\sigma_0$ to be the memoryless Max strategy on~$S_0$ with $\sigma_0(s) = \alpha_s$ for all $s \in S_0$, where $\alpha_s$ is from the definition of~$R(n)$.
For $s \in S_0$, call $b \in B(s)$ \emph{value-preserving} (or \emph{value-increasing}) if $b$~is value-preserving (or value-increasing, respectively) for $s$ and~$\sigma_0(s)$.

\begin{lemma} \label{lemma:stuck-in-S0+}
For any Max strategy that plays $\sigma_0$ on~$S_0$ and any Min strategy, almost all plays that eventually remain in~$S_0^+$ and eventually contain only value-preserving Min actions reach~$\top$.
\end{lemma}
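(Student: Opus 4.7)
For $s\in R$, define $N(s)=\min\{n:s\in R(n)\}$, so $N(\top)=0$ and $1\le N(s)\le |S|-1$ for $s\in R\setminus\{\top\}$. By the definition of~$R(n+1)$ and the choice of~$\alpha_s$, for every $s\in R$ with $N(s)\ge 1$ and every value-preserving $b\in B(s)$, the distribution $p(s,\sigma_0(s),b)$ assigns mass at least~$\delta$ to~$R(N(s)-1)$. Informally, under $\sigma_0$ and any value-preserving Min action, one step from $s\in R\setminus\{\top\}$ either reaches~$\top$ or strictly decreases~$N$, with combined probability at least~$\delta$.

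Fix a Max strategy~$\sigma$ extending~$\sigma_0$ on~$S_0$ and any Min strategy~$\pi$; let $\probm$ denote the induced measure. Define
\[
\psi_n(s) \eqdef \sup_{\pi'} \probm_{s,\sigma,\pi'}\bigl(s_0,\ldots,s_n\in R\setminus\{\top\} \text{ and } b_0,\ldots,b_{n-1} \text{ are value-preserving}\bigr),
\]
with the supremum over all Min strategies~$\pi'$. I claim $\psi_n(s)\le 1-\delta^{N(s)}$ whenever $s\in R$ and $n\ge N(s)$, by induction on~$N(s)$. For $N(s)=n_0\ge 1$, any value-preserving~$b$ at~$s$, split $\sum_{s'\in S} p(s,\sigma_0(s),b)(s')\,\psi_{n-1}(s')$ over $R(n_0-1)$ and its complement: the induction hypothesis together with $\psi_{n-1}(\top)=0$ gives $\psi_{n-1}\le 1-\delta^{n_0-1}$ on the whole of~$R(n_0-1)$, while $\psi_{n-1}\le 1$ elsewhere; since $p(s,\sigma_0(s),b)(R(n_0-1))\ge \delta$, the sum is bounded by $1 - \delta^{n_0-1}\cdot\delta = 1-\delta^{n_0}$. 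Taking the supremum over~$\pi'$ yields the claim; in particular $\psi_{|S|-1}(s)\le 1-\delta^{|S|-1}$ uniformly over $s\in R$.

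Iterating with the Markov property gives $\psi_{(k+1)(|S|-1)}(s)\le \psi_{|S|-1}(s)\cdot \sup_{s'\in R\setminus\{\top\}}\psi_{k(|S|-1)}(s')$, and an induction on~$k$ produces $\psi_{k(|S|-1)}(s)\le (1-\delta^{|S|-1})^k$, so $\psi_n(s)\to 0$ as $n\to\infty$. Consequently, for any history~$h$ of length~$N$ with $s_h\in R\setminus\{\top\}$, the probability under~$\sigma,\pi$ that the continuation of~$h$ stays in $R\setminus\{\top\}$ forever with only value-preserving Min actions is bounded by $\lim_n\psi_n(s_h)=0$. Integrating over such~$h$ gives $\probm(E_N\cap\avoid{\top})=0$ for each~$N$, where $E_N$ denotes the event that $s_n\in S_0^+$ and $b_n$ is value-preserving for all $n\ge N$. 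The event in the lemma equals $\bigcup_{N\ge 0}(E_N\cap\avoid{\top})$, so countable subadditivity finishes the proof.

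The main obstacle is the $\psi_n$-induction: the inductive estimate needs $\psi_{n-1}\le 1-\delta^{n_0-1}$ to hold uniformly on all of~$R(n_0-1)$, which is why absorbing $\psi_{n-1}(\top)=0$ into the bound and using the full mass $p(\ldots)(R(n_0-1))\ge \delta$ (rather than only the mass away from~$\top$) makes the estimate telescope. The uniform positive lower bound $\delta=\min_{s\in R}\delta_s>0$, available thanks to the finiteness of~$S$, is essential both for the single-step decrease and for the resulting geometric decay.
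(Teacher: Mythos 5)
Your proof is correct and takes essentially the same route as the paper's: a rank function (least $n$ with $s \in R(n)$), the uniform bound $\delta$ guaranteeing that every value-preserving step places mass at least $\delta$ one rank lower, the resulting geometric decay of the probability of staying in $R\setminus\{\top\}$ under only value-preserving Min actions, and a countable union over finite histories to reduce ``eventually'' to ``from now on''. The only difference is presentational: the paper states the bound $\delta^{|S|-1}$ per block of $|S|-1$ steps and concludes almost-sure reachability directly, whereas you make the same argument explicit via the $\psi_n$-induction and the conditioning step.
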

\begin{proof}
Let $\sigma$ be a Max strategy that plays~$\sigma_0$ on~$S_0$, and let $\pi$ be a Min strategy.
Since the set of (finite) histories is countable, it suffices to consider the set of plays which always remain in~$S_0^+$ and contain only value-preserving Min actions.
Let $s \in S_0^+$.
By \cref{lemma:R=S0+}, we have $s \in R$.
Let $n \le |S|-1$ be the smallest~$n$ with $s \in R(n)$.
We have $p(s,\sigma_0(s),b)(R(n-1)) \ge \delta$ for all value-preserving Min actions~$b$.
It follows that at any time the probability that in at most $|S|-1$ steps $R(0) = \{\top\}$ is reached is at least $\delta^{|S|-1} > 0$.
Thus, almost surely $\top$~is eventually reached.
\qed
\end{proof}

Since $S_0$ and Min's action sets are finite,
we can choose an $\eps>0$ small enough so that for all $s \in S_0$ and all
value-increasing Min actions~$b$ we have
\begin{equation}
\label{eq:cond-2ep}
	\sca{p(s,\sigma_0(s),b)}{\valueo{}}
\ge \valueof{}{s} + 2 \eps
\end{equation}
So any value increase by a Min action is at least~$2 \eps$.

Define a game~$\game_1$ with state space $S_1 \cup \{\top, \bot\}$.
Its transition function~$p_1$ is obtained from~$p$ by redirecting probability mass away from $S_0 \setminus \{\top, \bot\}$ as follows.
Each transition to a state $s \in S_0 \setminus \{\top, \bot\}$ is redirected to~$\top$ with probability $\valueof{}{s}$ and to~$\bot$ with probability $1-\valueof{}{s}$.
Then the value of each state in~$\game_1$ is equal to its value in~$\game$.

As in the proof of \cref{proposition:reach-eps} we obtain from~$\game_1$ a ``leaky'' version, $\game_1^-$, with transition function~$p_1^-$, such that Max's action sets are finite
 and, defining $v(s) \eqdef \valueof{\game_1^-,\avoid{\bot}}{s}$ for all $s \in S_1 \cup \{\top, \bot\}$, we have
\begin{equation}\label{eq:star}
v(s) \ \ge \ \valueof{}{s} - \eps \qquad \text{for all } s \in S_1.
\end{equation}
By \cref{proposition:safety-memoryless} Max has a memoryless optimal (for safety) strategy~$\sigma_1$ in~$\game_1^-$.
Thus, for all $s \in S_1$ and all $b \in B(s)$, as in~\Cref{eq:safe}, we have

\[
\sca{p_1^-(s,\sigma_1(s),b)}{v} \ \ge \ v(s)\,.
\]

Next we define a ``leaky'' version, $\game^-$, of the whole game~$\game$ with the whole state space~$S$ and a transition function~$p^-$.
On~$S_1$, function~$p^-$ is defined similarly to~$p_1^-$, except that the probability mass that was directed away from $S_0 \setminus \{\top, \bot\}$ in the definition of $\game_1$ is now not directed away and instead enters~$S_0$ as originally.
However, the leaks in every transition from~$S_1$ are also present in~$p^-$.
Extend the function~$v$ to the whole state space~$S$ by defining $v(s) \eqdef \valueof{}{s}$ for all $s \in S_0$.
Then, for all $s \in S_1$ and all $b \in B(s)$
\[
\sca{p^-(s,\sigma_1(s),b)}{v} \ = \ \sca{p_1^-(s,\sigma_1(s),b)}{v} \ \ge \ v(s)\,.
\]

On~$S_0$, the transition function~$p^-$ is obtained from~$p$ by making value-increasing Min actions leak~$\eps$, in the sense defined in \cref{section:eps-optimal-reach}.
For all $s \in S_0$ and all value-increasing Min actions $b \in B(s)$

\begin{align*}
&\sca{p^-(s,\sigma_0(s),b)}{v} \\
&\ge \ \sca{p^-(s,\sigma_0(s),b)}{\valueo{}} - \eps && \text{by \eqref{eq:star} and the definition of~$v$ on~$S_0$} \\
&\ge \ \sca{p(s,\sigma_0(s),b)}{\valueo{}} - \eps - \eps && \text{from the definition of~$p^-$} \\
&\ge \ \valueof{}{s} + 2\eps - \eps - \eps && \text{by } \eqref{eq:cond-2ep}\\
&=   \ v(s) && \text{definition of $v$ on $S_0$.}
\end{align*}

For all $s \in S_0$ and all value-preserving Min actions $b \in B(s)$
\begin{align*}
&\sca{p^-(s,\sigma_0(s),b)}{v} \\
&=\ \sca{p(s,\sigma_0(s),b)}{v} && \text{from the definition of $p^-$} \\
&=\ \sca{p(s,\sigma_0(s),b)}{\valueo{}} && \text{support of $p(s,\sigma_0(s),b)$ is a subset of~$S_0$} \\
&=\ \valueof{}{s} && \text{$b$ is value-preserving} \\
&=\ v(s) && \text{definition of $v$ on $S_0$.}
\end{align*}

Define the memoryless strategy~$\sigma$ by naturally combining $\sigma_0$ and~$\sigma_1$.
We have shown above that for all $s \in S$ and all $b \in B(s)$ we have $\sca{p^-(s,\sigma(s),b)}{v} \ge v(s)$.
From applying \cref{lemma:mart} we conclude that for all $s \in S$ and all Min strategies~$\pi$ we have
\[
\probm_{\game^-,s,\sigma,\pi}(\avoid{\bot}) \ \ge \ v(s)\,.
\]
In~$\game^-$, due to the leaks, almost all $\bot$-avoiding plays eventually remain in~$S_0^+$ and eventually have only value-preserving Min actions.
But by \cref{lemma:stuck-in-S0+}, almost all of these plays reach~$\top$.
Thus, we have for all $s \in S$ and all Min strategies~$\pi$
\[
    \probm_{\game^-,s,\sigma,\pi}(\reach{\top}) \
 = \ \probm_{\game^-,s,\sigma,\pi}(\avoid{\bot}) \
 \ge\ v(s)\,,
\]
which, by definition, equals $\valueof{}{s}$ if $s \in S_0$ and, by~\eqref{eq:star}, is at least $\valueof{}{s} - \eps$ if $s \in S_1$.
Since $\probm_{\game,s,\sigma,\pi}(\reach{\top}) \ge \probm_{\game^-,s,\sigma,\pi}(\reach{\top})$, this completes the proof of \cref{theorem:main}.

\section{Conclusion and Related Work}\label{sec:conclusion}

We have shown that, in finite reachability games with finite action sets for Min,
optimal Max strategies, where they exist, can
be chosen as memoryless randomized. However, this does not carry over to
countably infinite reachability games.

Intuitively, the reason for this is that Min can play sub-optimally
and give ``gifts'' to Max that increase the expected
value of the current state, but delay progress towards the target.
In countably infinite reachability games,
Min might give infinitely many smaller and smaller gifts and delay
progress indefinitely, unless Max uses memory to keep track of these gifts
in order to react correctly.

Finite reachability games are simpler, because gifts from Min to Max
are universally lower bounded in size, due to the finite state space and the
finiteness of Min's action sets. Therefore, Min cannot give infinitely many
gifts to Max, except in a nullset of the plays. Without such distracting
gifts, Max can make steady progress towards the target.
Moreover, if Min does give a gift once, then Max does not need to remember how
large it was, since it is universally lower bounded.

The existence of optimal strategies is also affected by the size of the state
space. While finite turn-based reachability games always admit optimal Max strategies
\cite{CONDON1992203}, even in countably infinite MDPs, optimal Max strategies
for reachability need not exist.
However, it was shown by Ornstein \cite[Thm.~B]{Ornstein:AMS1969} that $\eps$-optimal
Max strategies for reachability in countably infinite MDPs can be chosen
as memoryless and deterministic.
These strategies can even be made uniform, i.e., independent of the start
state. Moreover, if an optimal strategy does exist for Max in a
countable MDP, then there also exists one that is memoryless and deterministic
\cite[Prop.~B]{Ornstein:AMS1969}.

These results on countable MDPs do not carry over to countable 2-player
stochastic reachability games.
While Max always has $\eps$-optimal randomized memoryless  strategies
in countable concurrent reachability games with finite action sets
\cite[Corollary~3.9]{Secchi97}, these
strategies depend on the start state and cannot be made uniform
\cite{Raghavan-Nowak:1991}.
This non-uniformity even holds for the subclass of countable finitely branching turn-based
reachability games \cite{KMSTW:DGA}.
However, uniformity can be regained with $1$ bit of public memory,
i.e., there exist uniformly $\eps$-optimal Max strategies in countable
concurrent reachability games with finite action sets that are deterministic and use just $1$ bit of
public memory \cite{KMSTW:DGA}.
Optimal Max strategies in countable turn-based finitely branching
reachability games, where they exist, can be chosen to use just a step counter and $1$ bit
of public memory (but not just a step counter or just finite memory).
On the other hand, in concurrent games with finite action sets,
a step counter plus finite private memory does not suffice for optimal
Max strategies in general \cite{KMSTW:DGA}.

If Min is allowed infinite action sets (resp.\ infinite branching) 
in countably infinite reachability games, then Max generally needs infinite memory
for $\eps$-optimal, optimal and almost surely winning strategies.
There exists a turn-based countable reachability game with infinite Min
branching (and finite Max branching), such that every state admits an almost
surely winning strategy for Max,
and yet every Max strategy that uses only a step counter plus finite
private memory is still useless (in the sense that Min can make its
attainment arbitrarily close to zero) \cite{KMSTW:DGA}.

\bibliographystyle{splncs04}
\bibliography{journals,conferences,refs}

\begin{thebibliography}{10}
\providecommand{\url}[1]{\texttt{#1}}
\providecommand{\urlprefix}{URL }
\providecommand{\doi}[1]{https://doi.org/#1}

\bibitem{AlfaroHK98}
de~Alfaro, L., Henzinger, T.A., Kupferman, O.: Concurrent reachability games.
  In: Annual Symposium on Foundations of Computer Science (FOCS). pp. 564--575.
  {IEEE} Computer Society (1998). \doi{10.1109/SFCS.1998.743507}

\bibitem{BertrandGG17}
Bertrand, N., Genest, B., Gimbert, H.: Qualitative determinacy and decidability
  of stochastic games with signals. Journal of the {ACM}  \textbf{64}(5),
  33:1--33:48 (2017), \url{https://doi.org/10.1145/3107926}

\bibitem{billingsley-1995-probability}
Billingsley, P.: Probability and Measure. Wiley, New York, NY (1995), third
  edition

\bibitem{BordaisB022}
Bordais, B., Bouyer, P., {Le Roux}, S.: Optimal strategies in concurrent
  reachability games. In: Computer Science Logic (CSL). LIPIcs, vol.~216, pp.
  7:1--7:17. Schloss Dagstuhl - Leibniz-Zentrum f{\"{u}}r Informatik (2022).
  \doi{10.4230/LIPIcs.CSL.2022.7}

\bibitem{bouyer2016}
Bouyer, P., Markey, N., Randour, M., Sangnier, A., Stan, D.: {Reachability in
  Networks of Register Protocols under Stochastic Schedulers}. In:
  International Colloquium on Automata, Languages and Programming (ICALP).
  vol.~55, pp. 106:1--106:14. Schloss Dagstuhl--Leibniz-Zentrum fuer
  Informatik, Dagstuhl, Germany (2016). \doi{10.4230/LIPIcs.ICALP.2016.106}

\bibitem{chen2013prism}
Chen, T., Forejt, V., Kwiatkowska, M., Parker, D., Simaitis, A.: Prism-games: A
  model checker for stochastic multi-player games. In: Tools and Algorithms for
  the Construction and Analysis of Systems (TACAS). pp. 185--191. Springer
  (2013). \doi{10.1007/978-3-642-36742-7_13}

\bibitem{CONDON1992203}
Condon, A.: The complexity of stochastic games. Information and Computation
  \textbf{96}(2),  203--224 (1992). \doi{10.1016/0890-5401(92)90048-K}

\bibitem{de2007concurrent}
De~Alfaro, L., Henzinger, T.A., Kupferman, O.: Concurrent reachability games.
  Theor. Comput. Sci.  \textbf{386}(3),  188--217 (2007).
  \doi{10.1016/j.tcs.2007.07.008}

\bibitem{Everett1957}
Everett, H.: Recursive games. In: Contributions to the Theory of Games, Volume
  III, Annals of Mathematics Studies, vol.~39, pp. 47--78. Princeton University
  Press, Princeton, N.J. (1957). \doi{10.1515/9781400882151-004}

\bibitem{Flesch-Predtetchinski-Sudderth:2020}
Flesch, J., Predtetchinski, A., Sudderth, W.: Positive zero-sum stochastic
  games with countable state and action spaces. Applied Mathematics and
  Optimization  \textbf{82},  499--516 (2020). \doi{10.1007/s00245-018-9536-3}

\bibitem{Gillette1958}
Gillette, D.: Stochastic games with zero stop probabilities. In: Contributions
  to the Theory of Games ({AM}-39), Volume {III}, pp. 179--188. Princeton
  University Press (dec 1958). \doi{10.1515/9781400882151-011}

\bibitem{GimbertH10}
Gimbert, H., Horn, F.: Solving simple stochastic tail games. In: Annual
  ACM-SIAM Symposium on Discrete Algorithms. pp. 847--862 (2010).
  \doi{10.1137/1.9781611973075}

\bibitem{KMSTW:DGA}
Kiefer, S., Mayr, R., Shirmohammadi, M., Totzke, P.: Strategy complexity of
  reachability in countable stochastic 2-player games. Dynamic Games and
  Applications  (2023), \url{https://arxiv.org/abs/2203.12024}, to appear

\bibitem{KieferMSW17a}
Kiefer, S., Mayr, R., Shirmohammadi, M., Wojtczak, D.: On strong determinacy of
  countable stochastic games. In: ACM/IEEE Symposium on Logic in Computer
  Science (LICS). pp. 1--12. {IEEE} Computer Society (2017).
  \doi{10.1109/LICS.2017.8005134}

\bibitem{KMSW2017}
Kiefer, S., Mayr, R., Shirmohammadi, M., Wojtczak, D.: {Parity Objectives in
  Countable MDPs}. In: ACM/IEEE Symposium on Logic in Computer Science (LICS)
  (2017). \doi{10.1109/LICS.2017.8005100}

\bibitem{KumarShiau}
Kumar, P.R., Shiau, T.H.: Existence of value and randomized strategies in
  zero-sum discrete-time stochastic dynamic games. SIAM Journal on Control and
  Optimization  \textbf{19}(5),  617--634 (1981). \doi{10.1137/0319039}

\bibitem{kucera_2011}
Ku\v{c}era, A.: Turn-based stochastic games. In: Apt, K.R., Gr\"{a}del, E.
  (eds.) Lectures in Game Theory for Computer Scientists, pp. 146--184.
  Cambridge University Press (2011). \doi{10.1017/CBO9780511973468.006}

\bibitem{Maitra-Sudderth:1998}
Maitra, A., Sudderth, W.: Finitely additive stochastic games with {B}orel
  measurable payoffs. International Journal of Game Theory  \textbf{27}(2),
  257--267 (July 1998). \doi{10.1007/s001820050071}

\bibitem{Raghavan-Nowak:1991}
Nowak, A., Raghavan, T.: Positive stochastic games and a theorem of {Ornstein}.
  In: Stochastic Games And Related Topics, Theory and Decision Library (Game
  Theory, Mathematical Programming and Operations Research), vol.~7. Springer
  (1991). \doi{10.1007/978-94-011-3760-7_11}

\bibitem{Ornstein:AMS1969}
Ornstein, D.: On the existence of stationary optimal strategies. Proceedings of
  the American Mathematical Society  \textbf{20}(2),  563--569 (1969).
  \doi{10.2307/2035700}

\bibitem{Secchi97}
Secchi, P.: Stationary strategies for recursive games. Math. Oper. Res.
  \textbf{22}(2),  494--512 (1997). \doi{10.1287/moor.22.2.494}

\bibitem{shapley1953}
Shapley, L.S.: Stochastic games. Proceedings of the National Academy of
  Sciences  \textbf{39}(10),  1095--1100 (1953). \doi{10.1073/pnas.39.10.1095}

\end{thebibliography}

\end{document}